\newcommand{\MAP}{M\!\!A\!P}
\newcommand{\theo}{\textrm{latent}}
\newcommand{\nbInd}{N}
\newcommand{\Lik}{\mathcal{L}}
\newcommand{\ind}[1]{1\!\textrm{l}_{#1}}
\newcommand{\ol}[1]{\overline{#1}}
\newcommand{\dis}{\displaystyle}
\newcommand{\e}{{\rm I\!E}}
\newcommand{\var}{{\rm var}}
\newcommand{\p}{{\rm I\!P}}
\DeclareMathOperator*{\argmax}{arg\,max}
\DeclareMathOperator*{\argmin}{arg\,min}
\newcommand\numberSampleVarThe{4}
\newtheorem{thme}{Theorem}
\newtheorem{prop}[thme]{Proposition}
\newtheorem{lem}[thme]{Lemma}
\providecommand{\keywords}[1]
{
  \small
  \textbf{\textit{Keywords: }} #1
}
\definecolor{color1}{rgb}{.2,.6,1}      
\definecolor{noir}{rgb}{0,0,0}
\definecolor{Gris}{gray}{.5}
\definecolor{color2}{rgb}{.6,0.9,0.75}   
\definecolor{color3}{rgb}{.6,0.4,1}      
\definecolor{color4}{rgb}{.35,.80,.70}  
\definecolor{color5}{rgb}{.35,.75,.2}    
\definecolor{red2}{rgb}{1,0.3,0.1}    
\definecolor{mag}{rgb}{0.66,0,0}     
\definecolor{colorfond}{rgb}{.50,.95,.85}
\definecolor{BleuCiel}{rgb}{.2,.5,1}   
\definecolor{Bordeau}{rgb}{.7,0,0}    
\definecolor{color8}{rgb}{.60,.10,.40}   
\definecolor{VertFonce}{rgb}{0,0.6,0}
\definecolor{MidGreen}{rgb}{0.6,1,0.6}
\definecolor{LightGreen}{rgb}{0.88,1,0.88}
\definecolor{LightGray}{rgb}{0.94,0.94,0.94}
\definecolor{VeryLightBlue}{rgb}{0.9,0.9,1}
\definecolor{LightBlue}{rgb}{0.8,0.8,1}
\definecolor{DarkBlue}{rgb}{0,0,0.6}
\definecolor{VeryLightYellow}{rgb}{1,1,0.9}
\definecolor{LightYellow}{rgb}{1,1,0.6}
\definecolor{MidYellow}{rgb}{1,1,0.5}
\definecolor{VeryLightRed}{rgb}{1,0.9,0.9}
\definecolor{LightRed}{rgb}{1,0.8,0.8}
\definecolor{turquoise}{rgb}{0.00,0.53,0.68}{}
\definecolor{turqFonce}{rgb}{0.00,0.5,0.45}{}
\definecolor{mauve}{rgb}{0.50,0.00,0.50}
\definecolor{Soutput}{rgb}{0,0,0.56}
\definecolor{Sinput}{rgb}{0.56,0,0}
\date{}
\title{Reconciling Binary Replicates: Beyond the Average}
\author[1]{M. Royer-Carenzi}
\author[1]{H. Lorenzo}
\author[1]{P. Pudlo}
\affil[1]{Aix Marseille Univ, UMR 7373, CNRS, Centrale Marseille, I2M, Marseille, France}
\begin{document}

\maketitle

\begin{abstract}
Binary observations are often repeated to improve data quality, creating technical replicates. Several scoring methods are commonly used to infer the actual individual state and obtain a probability for each state. The common practice of averaging replicates has limitations, and alternative methods for scoring and classifying individuals are proposed. Additionally, an indecisive response might be wiser than classifying all individuals based on their replicates in the medical context, where 1 indicates a particular health condition. Building on the inherent limitations of the averaging approach, three alternative methods are examined: the median, maximum penalized likelihood estimation, and a Bayesian algorithm. The theoretical analysis suggests that the proposed alternatives outperform the averaging approach, especially the Bayesian method, which incorporates uncertainty and provides credible intervals. Simulations and real-world medical datasets are used to demonstrate the practical implications of these methods for improving diagnostic accuracy and disease prevalence estimation.
\end{abstract}

\keywords{Technical replicates, prevalence estimation, medical diagnosis, median,\\
Expectation-Maximization algorithm, Bayesian algorithm}

\section{Introduction}

In Medicine, Biology, and, more generally, applied sciences, conclusions are drawn from carefully analyzing noisy data. Noise may come from the measurement protocol, the instruments, or the intrinsic variability of the phenomenon under study. In this context, using technical replicates is a common practice to improve the quality of the conclusions. Approaches based on replication groups have been studied to reduce the false positive rate in replicate analysis, see for example~\cite{TCDZ10}. As Palmer said in \cite{Palmer1986}, ``Measurement errors are unavoidable, and repetitions should be the rule to quantify its magnitude''. However, dealing with technical replicates is not always straightforward: replicates of the same individuals are more likely to be similar than replicates of different individuals. This phenomenon is known as overdispersion, and it has been widely studied in the literature~\cite{Will82, Jae08, Im21, HML90}. Jaeger~\cite{Jae08} showed that not taking into account the correlation of the responses for the same subject leads to biased estimates and artificially increases our trust in the estimators, which can lead to invalid statistical analyses. A common practice to deal with this correlation is to summarize the replicates of the same individual by a single score, which is usually the empirical average of the replicates. The distinction between technical and biological replicates is essential to avoid misinterpretation in medical studies~\cite{TKKSG19,VFC12}.
This paper focuses on binary data: each replicate is a binary variable, and the individual's actual state is also a binary variable. The leading difficulty is that a binary variable carries very little information, and noise can change its value to its opposite. In medical diagnostics, the actual state equals $1$ when the subject has the disease and $0$ otherwise. The replicates are binary values that aim to infer the individual's actual state.

Dealing with binary replicates in medical studies involves addressing the challenges of repeated binary outcomes, which are common in longitudinal studies and clinical trials \cite{WFT01, SSAR18, RKPSC20, LFSLS22}. These outcomes are often correlated, necessitating specialized statistical methods for accurate analysis. The use of correlated binomial models has been proposed to take into account the dependence between technical replicates, enabling better estimation of error rates~\cite{zhang19}. We need at least to account for within-patient correlation in these data. Various approaches have been developed to handle these complexities, each with advantages and limitations. The recent literature dates back to 1982~\cite{Will82}. In 1990, Hujoel et al.~\cite{HML90} introduced a correlated binomial model that can be employed to estimate the sensitivity and specificity of these diagnostic tests; see also \cite{ahn1997statistical,ahn1997evaluation}. Various approaches have been developed to construct robust confidence intervals in the presence of correlated binary data~\cite{SCWLM20}. Contemporary methods emphasize integrating advanced biomarkers and machine learning algorithms for improved diagnostic performance. Yet there has been recent interest in improving the results on repeated binary data, with modern methods such as the Bayesian machinery of a latent class model~\cite{keddie2023estimating}. Two-phase sampling has been used to improve the efficiency of parameter estimation in longitudinal binary data contexts~\cite{TMHMRHS21}. However, few theoretical results have been published on the accuracy of these methods. This paper intends to fill in this gap in a relatively simple case.

We introduce and compare four statistical methods: an average-based, a median-based, a maximum-a-posteriori-based, and a Bayesian method. Each method provides a different way of scoring the individuals. We design a classifier for each approach that makes decisions for each individual and returns a trustworthy diagnostic. We propose different methods to estimate the prevalence of the disease in the sampled population and the sensitivity and specificity of the binary replicate without resorting to a gold standard test. With a relatively simple framework, we intend to provide theoretical results that compare the proposed methods illustrated in various numerical cases. Moreover, we introduce an indecision outcome in the individuals' classification, which is returned in cases of significant doubt. This helps us understand better the information provided by the multiple scorings of the individuals and better fit real-world medical instances in which erroneous decisions can have serious consequences.

We organized the paper as follows. Section~\ref{sec:methods} describes the statistical model, the four competing approaches, and their rationale. Section~\ref{sec:theoretical} gives mathematical results to compare their efficiency in terms of both classification and estimation. In Section~\ref{Sec:NumRes}, simulations are used to compare the methods, and numerical illustrations on medical datasets are provided: a periodontal dataset \cite{HML90} and a synthetic mammogram screening dataset \cite{BCS03, KiLe17}. The methods described in the paper are implemented in an R-package, available on GitHub at the following address: \href{https://github.com/pierrepudlo/BinaryReplicates}{https://github.com/pierrepudlo/BinaryReplicates}.

\section{Methods}\label{sec:methods}
In Section~\ref{sec:model}, we introduce a statistical model for binary replicates that intends to capture the binary state of an individual. To infer the prevalence of each state in the population and to classify the individuals, we introduce four statistical methods: an average-based, a median-based, a maximum-a-posteriori-based, and a Bayesian method. 
All these methods rely on scoring the individuals given their binary replicates; see Sections~\ref{sec:scoring_AM}, \ref{sec:EM}, and \ref{sec:Bayesian_scoring}. Finally, for each method, we introduce classifiers to recover the binary state of each individual based on the scoring and estimators of the prevalence in Section~\ref{sec:scoring_to_classifier}.

\subsection{The statistical model}\label{sec:model}

We consider a dataset comprising binary observations $X_{ij} \in {0, 1}$, where $i = 1, \ldots, {\nbInd}$ indexes individuals and $j = 1, \ldots, n_i$ indexes technical replicates. Each replicate $X_{ij}$ approximates the true status $T_i \in {0, 1}$, used to mitigate measurement imprecision. For instance, in medical diagnostics, $T_i = 1$ denotes disease presence, while $T_i = 0$ indicates health. We assume the replicates $X_{ij}$ are independent noisy measurements of $T_i$.
The marginal distribution of $T_i$ is Bernoulli with parameter $\theta_T$, representing disease prevalence. For each replicate $j$,
\begin{align*}
    p & = \p(X_{ij}=1 | T_i=0) \quad \text{is the false-positivity rate,} \\
    q & = \p(X_{ij}=0 | T_i=1) \quad \text{is the false-negativity rate.}
\end{align*}
The measurement error $T_i - X_{ij}$ takes values in ${-1, 0, 1}$, with $0$ indicating no error, $1$ a false positive, and $-1$ a false negative.

By the law of total probability, $\mathbb{E}(T_i - X_{ij}) = \theta_T q + (\theta_T - 1)p$, which is non-zero if $\theta_T \neq p / (p + q)$. Thus, the diagnostic test is biased under these conditions. If either $p$ or $q$ exceeds $1/2$, the replicate is more likely wrong than correct, resulting in unreliable data. We assume $p, q \in (0, 1/2)$. In what follows, we assume that $p,q\in(0, 1/2)$.

\subsection{Sufficient statistics} \label{sec:sufficient}
We summarize the entire dataset of $X_{ij}$, for $i = 1, \ldots, {\nbInd}$ and $j = 1, \ldots, n_i$, by introducing the sum $S_i = X_{i1} + \cdots + X_{in_i}$ for each individual $i$. This sum represents the number of positive replicates for individual $i$. The vector $(S_1, \ldots, S_{\nbInd})$ is a sufficient statistic for the dataset. Indeed, the only information lost in this transformation is the order of the replicates for each individual. However, since the technical replicates for a given individual are exchangeable, their order carries no relevance to the statistical analysis. Therefore, no statistical information is lost when we replace the dataset with the vector $(S_1, \ldots, S_{\nbInd})$.

Given $T_i=1$, resp. $T_i=0$, the technical replicates $X_{ij}$ are independent Bernoulli random variables with parameter $1-q$, resp. $p$. Thus, $S_i$ given the true value $T_i$ is
\begin{equation} \label{eq:SgivenT}
    \big[S_i\big|T_i\big]\sim \mathscr{B}\text{in}\big(n_i, T_i(1-q)+(1-T_i)p\big)=
    \ind{T_i=1} \mathscr{B}\text{in}(n_i, 1-q) + \ind{T_i=0}\mathscr{B}\text{in}(n_i, p)
\end{equation}
with independence between the $S_i$'s given $(T_1,\ldots, T_{\nbInd})$.

\subsection{Scoring technical replicates with the mean and the median} \label{sec:scoring_AM}

The average-based score $Y_{A,i}$ is defined as
\[
Y_{A,i} = \frac{1}{n_i} \sum_{j=1}^{n_i} X_{ij}=\frac{S_i}{n_i}.
\]
This score is heavily used in practice, as it is a simple scaling of the sufficient statistic $S_i$.\\
The median-based score $Y_{M,i}$ is defined as
\begin{equation} \label{eq:Y_M1}
    Y_{M,i} =\operatorname{median}\big(X_{i1},\ldots, X_{in_i}\big)=\ind{S_i > n_i/2} + \frac12 \ind{S_i = n_i/2}.
\end{equation}
In the specific case where $n_i$ is even and when there is a tie in the frequencies of $0$ and $1$ within the replicates of individual $i$, we set $Y_{M, i}=1/2$. \\
Note that with the above definitions, the mean and median scores are equal when $n_i\in\{1,2\}$. However, their effectiveness is not guaranteed, and we will compare them with other scoring methods that are non-linear functions of $S_i$.

\subsection{Maximum-A-Posteriori scoring}\label{sec:EM}
The most likely value of $T_i$ given the observed values of $S_i$ depends on the value of $\p(T_i=1|S_i=s_i)$, whether below or above $1/2$.
Using~\eqref{eq:SgivenT}, $T_i\sim\mathscr{B}\text{er}(\theta_T)$ and the Bayes formula, the distribution of $T_i$ given $S_i$ is
\begin{equation}
    \big[T_i\big|S_i=s_i\big] \sim \mathscr{B}\text{er}\left(\frac{\theta_T q^{n_i-s_i} (1-q)^{s_i}}{\theta_T q^{n_i-s_i} {(1-q)}^{s_i} + (1-\theta_T)p^{s_i} {(1-p)}^{n_i-s_i}}\right).
    \label{eq:TgivenS}
\end{equation}
Thus, we introduce the likelihood-based score $Y_{L, i}(\theta_T, p, q)$ as
\begin{equation} \label{eq:Y_L}
    Y_{L,i}(\theta_T, p, q) = \p(T_i=1|S_i=s_i) = \frac{\theta_T q^{n_i-s_i} {(1-q)}^{s_i}}{\theta_T q^{n_i-s_i} (1-q)^{s_i} + (1-\theta_T)p^{s_i} {(1-p)}^{n_i-s_i}}.
\end{equation}
This score is an increasing function of $s_i$ because of Lemma~\ref{lem:Lincrease} in~\ref{sec:Lincrease}. This score $Y_{L, i}(\theta_T, p, q)$ depends on the values of $p$, $q$, and $\theta_T$. Estimating the fixed parameters $\theta_T$, $p$, and $q$ by maximum likelihood is possible. Yet, we need to penalize the likelihood to obtain non-degenerate estimates.

The likelihood of the fixed parameters $\theta_T$, $p$, and $q$ given the data $(s_1,\ldots, s_{\nbInd})$ 
is
\begin{equation}\label{eq:likelihood}
    \Lik(\theta_T, p, q) = \prod_{i=1}^n \Big\{\theta_T (1-q)^{s_i}q^{n_i-s_i} + (1-\theta_T) p^{s_i}{(1-p)}^{n_i-s_i}
    \Big\},
\end{equation}
up to a multiplicate constant.
Besides, this maximum might suffer from unrealistic maxima, corresponding to $p=0$ or $q=0$, for example. For this reason, we penalize the likelihood with a $\mathscr{B}\text{eta}(2,2)$-prior on both $p$ and $q$ to avoid $p=0$ or $q=0$.. The maximum of the corresponding functional, the Maximum-A-Posteriori (MAP), is not explicit. As for maximum likelihood, we need to resort to numerical optimization. The standard algorithm to fulfill this task in mixture models is the Expectation-Maximization (EM) algorithm; see, e.g., Chapters 1, 2, and 9 of~\cite{fruhwirth2019handbook}. For further details, we refer to~\ref{app:EM}.

By plugging in the result $(\widehat{\theta}_{T,\MAP},\widehat{p}_{\MAP},\widehat{q}_{\MAP})$ of the EM algorithm into \ref{eq:Y_L}, we define the MAP score ${Y}_{\text{MAP}, i}$ as
\begin{align}\label{equ:mlScore}
    {Y}_{\MAP,i} = Y_{L,i}(\widehat{\theta}_{T, \MAP},\widehat{p}_{\MAP},\widehat{q}_{\MAP}).
\end{align}

\subsection{Bayesian scoring}\label{sec:Bayesian_scoring}

\begin{figure}[tbh]
    \centering
    \begin{minipage}{.45\textwidth}
        \includegraphics[width=\textwidth]{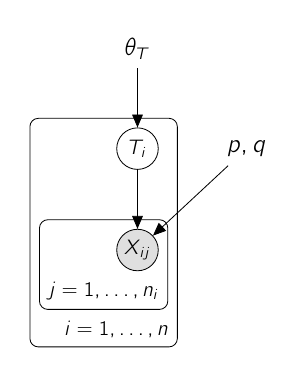}
    \end{minipage}%
    \begin{minipage}{.53\textwidth}
        \begin{align*}
            \theta_T &\sim \mathscr{B}\text{eta}(a_T,b_T),\\
            p &\sim \mathscr{B}\text{eta}(a_\text{FP},b_\text{FP})_{|p\in(0,1/2)},\\
            q &\sim \mathscr{B}\text{eta}(a_\text{FN},b_\text{FN})_{|q\in(0,1/2)},\\
            \big[T_i\big|\theta_T\big]
            &\sim \mathscr{B}\text{er}(\theta_T),\\
            \big[X_{ij}\big|T_i,p,q\big]
            &\sim \mathscr{B}\text{er}\big(T_i(1-q) + (1-T_i)p\big),\\
            \big[S_i\big|T_i, p, q\big]
            &\sim \mathscr{B}\text{in}\big(n_i,T_i (1-q)+(1-T_i)p\big),
            \\
            \text{where } S_i & = \sum_{j=1}^{n_i} X_{ij}.
        \end{align*}
    \end{minipage}
    \caption{\textbf{The Bayesian model}: the Directed Acyclic Graph (left) and the generative model (right). Variables $\theta_T$, $p$, and $q$ are fixed parameters (with a prior distribution), the $T_i$'s are latent variables, and the $X_{ij}$'s are the observed data. The hyperparameters $a_T,b_T, a_\text{FP}, b_\text{FP}, a_\text{FN}, b_\text{FN}$ set the prior distribution and should be chosen by the user. The prior distribution on $p$ and $q$ are Beta distributions truncated so that $p$ and $q$ are both in $(0,1/2)$. The loop over the replicates $j=1,\ldots, n_i$ can be replaced by a single node $S_i$, counting the number of ones among the $X_{ij}$'s since it is a sufficient statistic.}\label{fig:DAG}
\end{figure}

We encompass the stochastic model described above in a Bayesian framework. We consider a Bayesian model with a prior distribution on $\theta_T$, $p$, and $q$ whose density is denoted $\pi(\theta_T, p,q)$, see Figure~\ref{fig:DAG}. In the Bayesian framework, $\p_\pi$ and $\e_\pi$ denote the probability measure and the expected value, as detailed in \ref{app:bayesian}

Using the Bayesian model defined above and its posterior distribution, the Bayesian score $Y_{B, i}$ is defined as the following posterior expected value
\[
    Y_{B,i} = \e_\pi\big(T_i\big|S_1,\ldots,S_{\nbInd} \big)=\p_\pi\big(T_i=1\big|S_1,\ldots,S_{\nbInd} \big),
\]
that integrates $T_i$ over the posterior values of $\theta_T$, $p$, and $q$ and thus over the uncertainty on the fixed parameters. Alternatively, assuming $\pi(\theta_T, p,q|S_1,\ldots, S_{\nbInd})$ is the posterior distribution of the fixed parameters given the data, it can be viewed as the posterior expected value of the likelihood-based score $Y_{L, i}(\theta_T,p,q)=\p(T_i=1|S_i)$, namely
\begin{equation}\label{eq:Y_B}
    Y_{B,i} = \int Y_{L,i}(\theta_T, p, q)
    \pi(\theta_T, p, q|S_1,\ldots, S_{\nbInd})\mathrm{d}\theta_T\mathrm{d}p\mathrm{d}q.
\end{equation}
Unlike scores $Y_{A, i}$ and $Y_{M, i}$, values of $Y_{\MAP, i}$ and $Y_{B, i}$ depend not only on the replicates observed on the $i$-th individual (i.e., $S_i$) but also on the whole dataset through the posterior distribution on the fixed parameters.

To decide how to set the hyperparameters for the prior distribution of the fixed parameters $\theta_T$, $p$, and $q$, see Figure~\ref{fig:DAG}, remember that each of these parameters represents the probability of getting a $1$ in a binary ($0$/$1$) trial. For these probabilities, we use the Beta distribution $\mathscr{B}\text{eta}(a, b)$, where $a, b > 0$. This distribution represents information from a sample of size $a + b$ consisting of $a$ observations equal to $1$ and $b$ equal to $0$. 
A non-informative prior is the Beta distribution with $a = b = 1/2$, representing minimal prior knowledge. This can be thought of as a fictive sample of size $1$ with equal likelihoods for $0$ and $1$. We recommend using this non-informative prior for the prevalence $\theta_T$. The uniform distribution on $(0, 1)$ is a Beta distribution with $a = b = 1$, providing a weakly informative prior based on a balanced sample of size $2$.
For the false-positive rate $p$ and the false-negative rate $q$, setting $a_{\text{FP}} = a_{\text{FN}} = b_{\text{FP}} = b_{\text{FN}} = 1/2$ is not ideal: this distribution places too much weight near $0$, where the likelihood-based scoring performs poorly. Since replicates imply noisy measurements, meaning $p$ and $q$ should not be too close to $0$, we recommend using $a_{\text{FP}} = a_{\text{FN}} = b_{\text{FP}} = b_{\text{FN}} = 2$, which puts less weight near $0$.
In summary, we propose using $a_T = b_T = 1/2$ and $a_{\text{FP}} = a_{\text{FN}} = b_{\text{FP}} = b_{\text{FN}} = 2$ as default values for the hyperparameters, adjusting them as needed if more information is available. See Figure~\ref{fig:periodontal_posterior}, Section~\ref{sec:periodontal}, for an example.

Finally, the median-based score is a limit case of the Bayesian score. Specifically, when the prior is highly concentrated around $(\theta_T, p, q) = (1/2, 0, 0)$, the posterior distribution of the fixed parameters also stays concentrated around $(1/2, 0, 0)$. In this scenario, the Bayesian score $Y_{B,i}$ defined in~\eqref{eq:Y_B} becomes approximately equal to $Y_{L,i}(1/2, 0, 0)$, which corresponds to the median-based score $Y_{M,i}$. Thus, the median-based score can be seen as an approximation of the Bayesian score when using a strongly informative prior.
This strongly informative prior assumes that the false-positive and false-negative rates are negligible and that the prevalence is roughly $1/2$. Such a situation occurs with the prior defined in Figure~\ref{fig:DAG}, when the hyperparameters $b_{\text{FN}} = b_{\text{FP}} \to \infty$ and $a_T = b_T \to \infty$ simultaneously.

\subsection{From scorings to prevalence and error rates estimators}\label{sec:scoring_to_estimates}

To infer the prevalence $\theta_T$, we consider the four following estimates:
\begin{equation}
\label{Eq:PrevEst}
    \forall K\in\{A, M, \MAP\},\ \widehat{\theta}_{T,K} =  \dfrac{1}{{\nbInd}}\sum_{i=1}^{\nbInd} Y_{K,i}\quad\text{and}\quad
    \widehat{\theta}_{T,B} = \e_\pi\big(\theta_T\big|S_1, \ldots, S_{\nbInd}\big).
\end{equation}
For average-, median- and maximum-a-posteriori-based methods, scorings can be used to estimate the false-positivity and false-negativity rates $p$ and $q$:
\begin{equation}
\label{Eq:pqEst}
    \forall K\in\{A, M, \MAP \},\ \widehat{p}_{K} =  \frac{\sum_{i=1}^{\nbInd} S_i (1-{Y}_{K,i}) }{\sum_{i=1}^{\nbInd} n_i (1-{Y}_{K,i}) } \ \textrm{ and } \ \widehat{q}_{K} =  \frac{\sum_{i=1}^{\nbInd}  (n_i-S_i){Y}_{K,i}}{\sum_{i=1}^{\nbInd} n_i {Y}_{K,i}}  .
\end{equation}

Bayesian statistics has its way of estimating parameters using their posterior expected values. We can approximate them easily by computing the average of a sample of $\theta_T, p$ or $q$ drawn from the posterior distribution with the Hamiltonian Monte Carlo algorithm of \texttt{rstan}.

\subsection{From scorings to classifiers}\label{sec:scoring_to_classifier}

We compute our predictions of the latent $T_i$ values by thresholding the scores. 
The above scoring statistics take values in $[0,1]$ and summarize information from a given dataset, the $s_i$s, into a single value.
We should interpret them as a tool to infer the latent $T_i$ value: when $T_i=1$, we expect high scores; when $T_i=0$, we expect low scores.
We introduce an indecision response ($1/2$) when we would not trust a decision based on the observed replicates. Since binary data carry little information, this could happen. It is an invitation to add more replicates related to this individual before deciding.

To classify the individuals, we introduce two thresholds $0 < v_L \leq 1/2 \leq v_U < 1$ and set the classifiers as
\begin{equation}\label{eq:classifier}
    \widehat{T}_{K,i} = \Phi(Y_{K,i}), \quad \text{where}\ \Phi(y)= \begin{cases}
    0 & \text{if } y < v_L,\\
    1/2 & \text{if } v_L \leq y \leq v_U,\\
    1 & \text{if } y > v_U,
    \end{cases}
\end{equation}
for all methods $K\in\{A, M, \MAP, B\}$ and all individuals $i=1,\ldots, n$.
Since the median-based score $Y_{M,i}$ is always in $\{0, 1/2, 1\}$, we always have $\widehat{T}_{M,i} = Y_{M,i}$. This double thresholding method is related to risk theory. To compute the risk of a classifier, we introduce a loss function $\ell(t, \widehat t)$ that quantifies the cost of predicting $\widehat t\in\{0, 1/2, 1\}$ when the truth is $t\in\{0, 1\}$. See Section~\ref{sec:loss} for more details.

\subsection{Predictions}

Let us assume that a new individual $n+1$ is given to the agent through $n_{n+1}$ and $s_{n+1}$. It is possible to give the posterior prediction of its score $\widehat{Y}_{n+1}$ based on the dataset composed by the $n$ previous individuals. For $K\in\{A, M, \MAP\}$, we estimate the parameters $\widehat{\theta}_{T,K}, \widehat{p}_{K}$ and $ \widehat{q}_{K}$, from the individuals $\{1, \cdots, n \}$, and we compute
\begin{equation}
\label{Eq:PredAMMAP}
    \widehat{Y}_{K,n+1} = Y_{L,n+1}(\widehat{\theta}_{T,K}, \widehat{p}_{K}, \widehat{q}_{K}),
\end{equation}
where $\dis Y_{L,n+1}(\theta_T, p, q)=\p(T_{n+1}=1|S_{n+1}=s_{n+1})$ is computed as in Equation~\eqref{eq:Y_L}. It is also possible to build the prediction score for the Bayesian approach, for which the form is
\begin{equation}
\label{Eq:PredB}
     \widehat{Y}_{B,n+1} = \int Y_{L,n+1}(\theta_T, p, q) \, \pi(\theta_T, p, q|S_1,\ldots,S_n) \, \text{d}\theta_T \, \text{d}p \, \text{d}q.
\end{equation}
A Monte-Carlo estimator is chosen to approximate the previous integral, such as
\begin{align*}
    \widehat{Y}_{B,n+1} = \frac{1}{H} \sum_{h=1}^H Y_{L,n+1}(\theta_{T,h}, p_h, q_h),
\end{align*}
where each parameter $(\theta_{T,h}, p_h, q_h)$ is sampled from the posterior distribution $\pi(\theta_{T}, p, q|S_1,\ldots,S_n)$ such as described in Section~\ref{sec:Bayesian_scoring}.

Finally, note that, for $K\in\{A, M, \MAP\}$, K-prediction scores do not include any variability over the parameters, while Bayesian-prediction scores do. Thus, these K-prediction scores might lead to over-confident decisions.

\section{Theoretical results}\label{sec:theoretical}

In this Section, we provide efficiency results that compare the various methods introduced above. We start with the efficiency of the classifiers in terms of sensitivity and specificity and of specific loss functions, as introduced in Section~\ref{sec:loss}, that deal with indecision responses on the replicates.
Section~\ref{sec:accuracy_classifiers} gives the results on the classifiers. Section~\ref{sec:accuracy_prevalence} gives the results on the prevalence estimators.
To state the results, we may need the following hypotheses.
\begin{enumerate}[(H1)]
    \item The false-positivity and false-negativity rates $p$ and $q$ are in $(0,1/2)$. \label{hhalf}
    \item There exists at least one individual for which the number of replicates $n_i\ge 3$. \label{h3replicates}
    \item The $v_L$ and $v_U$ that define the classifiers in Section~\ref{sec:scoring_to_classifier} are such that $0<v_L\leq 1/2 \leq v_U <1$. \label{hv}
    \item The loss function $\ell(t,\widehat t)$ satisfies the conditions~\eqref{eq:lemma1} of Lemma~\ref{lem:loss} in Section~\ref{sec:loss}. \label{hloss}
\end{enumerate}

\subsection{
Loss functions for classifiers with indecision response and minimal risk classifiers
} \label{sec:loss}

In this Subsection, we only look at the following problem.
Assume we want to predict a binary random $T\in\{0, 1\}$, with three possible decisions: $0$, $1/2$ (inconclusive) and $1$ based on the known value of $\vartheta=\p(T=1)$. (In the following section, $\vartheta$ can be $\theta_T$ or $\p(T_i=1|S_i)$ if we reason given the data.)

Consider the loss function $\ell(t,\widehat t)$, defined on $\{0,1\}\times\{0,1/2,1\}$ such as
\begin{center}
    \begin{tabular}{c|ccc}
    \hline
        $\ell(t,\widehat t)$ & $0$ & $1/2$ & $1$\\
        \hline
            $0$ & $0$ & $a$ & $b$\\
            $1$ & $c$ & $d$ & $0$\\
            \hline
    \end{tabular}
\end{center}
where $a$, $b$, $c$ and $d$ are positive constants.
When $a=b=1$ and $c=d=0$, the loss function is related to the specificity. When $c=d=1$ and $a=b=0$, the loss function is related to the sensitivity. And, when $b=c=1$ and $a=d=0$, the loss function is the misclassification error. The general loss function $\ell(t,\widehat t)$ can be interpreted as follows.
If $a$ and $d$ are small enough compared to $b$ and $c$, the indecision response $1/2$ may be the best choice in case of a substantial uncertainty between $0$ and $1$. In medical applications, asking for further tests may cost less than making a doubtful decision. An indecision response is always an error, whether the truth is $0$ or $1$. Therefore, 
decision methods 
propose $\widehat{t}=1/2$ only if indecision costs are sufficiently low. The constraints are given in Equation~\eqref{eq:lemma1} of Lemma~\ref{lem:loss}. This gives $a<1/2$. However, this indecision cost $a$ must remain high (i.e. not too close to 0) for decisions to be made in most cases.

Sometimes we would consider $\ell$ in symmetrical form, taking $b=c=1$ and $a=d$, denoted ${\ell}_a$,
where $0<a<1$ is the indecision cost. Thus the cost of a false positive, $\ell_a(0,1)$, is equal to the cost of a false negative, $\ell_a(1,0)$. This is unrealistic in the context of medical diagnosis. Indeed, we often wish to avoid false negatives not to leave a diseased patient untreated. In this case, an absence of decision with $\widehat{t}=1/2$ is better than the false negative. The resulting cost of indecision, $a$, is thus lower than that of a false negative, set to $1$.

For a given loss function $\ell$, we are interested in the best classifiers $\widehat T^\star$ that minimize the risk $r(\widehat t)=\e\big(\ell(T,\widehat t)\big)$, defined as
\[
\widehat T^\star = \argmin_{\widehat t\in\{0, 1/2, 1\}} r(\widehat t),
\]
and we select the indecision response, i.e., $1/2$, if the risks tie in. Lemma~\ref{lem:loss} (a proof is given in~\ref{sec:loss_proof}) gives the conditions on $a, b, c$ and $d$ under which the indecision response can appear. It also provides the best decision in this case.
\begin{lem}\label{lem:loss}
    Assume $\vartheta=\p(T=1)$ is known.
    The best classifier $\widehat T^\star$ is
     \[
        \widehat T^\star = \Phi(\vartheta),
      \]
       where $\Phi$ is defined in Equation~\eqref{eq:classifier} with $v_L = \frac{a}{c-(d-a)}$ and $ v_U = \frac{b-a}{(d-a)+b}$
    \begin{align}
        \text{if and only if}\quad\frac{bc}{b+c} > a + (d-a)\frac{b}{b+c}
        \quad \text{and} \quad
        -b < (d-a) < c.\label{eq:lemma1}
    \end{align}
\end{lem}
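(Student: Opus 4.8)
The plan is to compute the three risks $r(0)$, $r(1/2)$, and $r(1)$ as affine functions of $\vartheta=\p(T=1)$ and then determine exactly when the minimum of the three is attained at $1/2$ for an interval of $\vartheta$-values. Writing $r(0)=\e(\ell(T,0))=c\vartheta$, $r(1/2)=a(1-\vartheta)+d\vartheta = a+(d-a)\vartheta$, and $r(1)=b(1-\vartheta)=b-b\vartheta$, we see that $r(0)$ is increasing through the origin, $r(1)$ is decreasing from $b$ down to $0$, and $r(1/2)$ is affine with intercept $a>0$ and slope $d-a$ of unknown sign. The classifier $\widehat T^\star$ picks whichever of these is smallest (with the stated tie-break in favour of $1/2$).

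Next I would identify the two crossover points. The line $r(1/2)$ meets $r(0)$ when $a+(d-a)\vartheta = c\vartheta$, i.e. at $\vartheta = a/(c-(d-a))$, which is the claimed $v_L$; it meets $r(1)$ when $a+(d-a)\vartheta = b - b\vartheta$, i.e. at $\vartheta = (b-a)/((d-a)+b)$, the claimed $v_U$. For the indecision response to be selected on a genuine sub-interval $(v_L, v_U)$ of $[0,1]$, one needs: (i) these crossover abscissae to be well-defined and ordered $0 < v_L \le v_U < 1$, which forces the denominators $c-(d-a)$ and $(d-a)+b$ to be positive — this is precisely $-b < (d-a) < c$; and (ii) at the point where $r(0)$ and $r(1)$ themselves cross, namely $\vartheta = b/(b+c)$, the value $r(1/2)$ must lie strictly below the common value $bc/(b+c)$, because otherwise the line $r(1/2)$ stays above $\min(r(0),r(1))$ everywhere and $1/2$ is never optimal. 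Condition (ii) is exactly $a + (d-a)\tfrac{b}{b+c} < \tfrac{bc}{b+c}$, the first inequality in~\eqref{eq:lemma1}. I would then verify that, conversely, when both inequalities of~\eqref{eq:lemma1} hold, the ordering $0 < v_L \le v_U < 1$ follows and the lower envelope of the three lines is $r(0)$ on $[0,v_L)$, $r(1/2)$ on $[v_L,v_U]$, and $r(1)$ on $(v_U,1]$, so that $\widehat T^\star = \Phi(\vartheta)$ with the stated thresholds. The tie-break at $\vartheta = v_L$ and $\vartheta = v_U$ accounts for the closed interval in the definition of $\Phi$.

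The remaining direction is the ``only if'': if $\widehat T^\star = \Phi(\vartheta)$ with $v_L \le 1/2 \le v_U$ and $v_L < v_U$ cannot both fail, then~\eqref{eq:lemma1} must hold. Here I would argue contrapositively: if $(d-a)$ violates $-b < (d-a) < c$, then one of the two lines $r(0)$, $r(1)$ is everywhere below or parallel to $r(1/2)$ on the relevant side, collapsing the indecision region; and if the first inequality fails, $r(1/2)$ never dips below $\min(r(0),r(1))$ so $v_L \ge v_U$ and no genuine indecision interval exists, contradicting that $\Phi$ (with $v_L \le 1/2 \le v_U$) is the minimiser. A small amount of care is needed to match the boundary/tie cases to the weak inequalities $v_L \le 1/2 \le v_U$ assumed in~(H3) and in the statement of $\Phi$.

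I expect the main obstacle to be bookkeeping rather than any deep idea: one must handle the sign of the slope $d-a$ carefully (the line $r(1/2)$ can be increasing, decreasing, or flat), check that the two candidate thresholds are automatically in $[0,1]$ and correctly ordered under~\eqref{eq:lemma1}, and reconcile the strict-versus-weak inequalities with the tie-breaking convention. A clean way to organise this is to first dispose of the degenerate cases where a denominator vanishes or goes negative, then treat the generic case by comparing the three affine functions pairwise and reading off the lower envelope.
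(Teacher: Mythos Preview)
Your approach is correct and essentially identical to the paper's own proof: compute the three affine risks $r(0)=c\vartheta$, $r(1)=b-b\vartheta$, $r(1/2)=a+(d-a)\vartheta$, obtain $v_L,v_U$ as the crossover abscissae, and characterise when $r(1/2)$ is the lower envelope on an interval via the slope condition $-b<(d-a)<c$ together with the requirement that $r(1/2)$ lie below the intersection point $\big(b/(b+c),\,bc/(b+c)\big)$ of $r(0)$ and $r(1)$. The paper's argument is slightly terser (it appeals to a picture and does not spell out the contrapositive), but the content is the same.
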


If we apply Lemma~\ref{lem:loss} to the symmetric loss function $\ell_a$, we must choose $v_L=a$ and $v_U=1-a$ to obtain the best classifier.

\subsection{Accuracy of the classifiers as a diagnostic tool}\label{sec:accuracy_classifiers}
Whatever the scoring method, we rely on the same two thresholds $v_L \leq v_U$ to transform the scores into diagnostics. Let us denote
\[
n_0 = \max_i n_i, \quad \text{and} \quad \delta_0= \begin{cases}\frac{1}{2 n_0}  & \text{if $n_0$ is odd},
\\ \frac{1}{2 (n_0-1)} &\text{otherwise.}
\end{cases}
\]
In the limit case where $v_L \in \left[ \frac{1}{2} - \delta_0; \frac{1}{2} \right]$ and $v_U \in \left[ \frac{1}{2} ; \frac{1}{2} + \delta_0 \right]$, it is easy to prove that both average- and median-based classifications are identical. Otherwise, we can compare the sensitivity and specificity of the average-based and median-based classifiers.

\begin{thme}\label{thme:accuracy_AM_classifiers}
    Assume {\em(H\ref{hhalf})} and {\em(H\ref{hv})}. We have
    \[
        \operatorname{sensitivity}(\widehat T_{A,i}) \le \operatorname{sensitivity}(\widehat T_{M,i}).
    \]
    The above inequality is strict if and only if $v_U > 1/2$, becoming $v_U > 1/2 + \delta_0$. Moreover, we have
    \[
        \operatorname{specificity}(\widehat T_{A,i}) \le \operatorname{specificity}(\widehat T_{M,i})
    \]
    The above inequality is strict if and only if $v_L < 1/2$, becoming $v_L < 1/2 - \delta_0$.

    As a consequence, the median-based classifier is also better in terms of the misclassification rate and informedness.
\end{thme}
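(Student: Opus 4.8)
The plan is to compute sensitivity and specificity explicitly in terms of the distribution of $S_i$ given $T_i$ from~\eqref{eq:SgivenT}, and then compare the two classifiers event by event. Recall that $\widehat T_{M,i} = \Phi(Y_{M,i})$ with $Y_{M,i}\in\{0,1/2,1\}$, so $\widehat T_{M,i}=1$ exactly when $S_i>n_i/2$, and $\widehat T_{A,i}=\Phi(S_i/n_i)$, so $\widehat T_{A,i}=1$ exactly when $S_i/n_i>v_U$, i.e. $S_i > n_i v_U$. Sensitivity here should mean $\p(\widehat T_{\cdot,i}=1\mid T_i=1)$ and specificity $\p(\widehat T_{\cdot,i}=0\mid T_i=0)$. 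First I would fix an individual $i$ and observe that since $v_U\ge 1/2$ by (H\ref{hv}), we have $\{S_i>n_iv_U\}\subseteq\{S_i>n_i/2\}$ as events — being above $n_iv_U$ forces being above $n_i/2$ — so $\widehat T_{A,i}=1$ implies $\widehat T_{M,i}=1$ pointwise. Conditioning on $T_i=1$ and taking probabilities immediately gives $\operatorname{sensitivity}(\widehat T_{A,i})\le\operatorname{sensitivity}(\widehat T_{M,i})$. Symmetrically, $v_L\le 1/2$ gives $\{S_i<n_iv_L\}\subseteq\{S_i<n_i/2\}\subseteq\{S_i<n_i/2\}\cup\{S_i=n_i/2\}$... here I must be slightly careful: $\widehat T_{M,i}=0$ means $Y_{M,i}=0$, i.e. $S_i<n_i/2$ (strict, since a tie gives $Y_{M,i}=1/2$), whereas $\widehat T_{A,i}=0$ means $S_i/n_i<v_L\le 1/2$, hence $S_i<n_i/2$ unless $v_L=1/2$ and $S_i/n_i<1/2$ anyway — in all cases $S_i<n_iv_L$ implies $S_i<n_i/2$. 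So again $\widehat T_{A,i}=0\Rightarrow\widehat T_{M,i}=0$ pointwise, and conditioning on $T_i=0$ yields the specificity inequality.

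For the strictness claims, the point is to decide when the containment of events is proper and whether $S_i$ charges the difference with positive probability. Consider sensitivity: the gap $\{\widehat T_{M,i}=1\}\setminus\{\widehat T_{A,i}=1\}$ equals $\{n_i/2 < S_i \le n_i v_U\}$. If $v_U=1/2$ this is empty (no integer strictly between $n_i/2$ and $n_i/2$), so the inequality is an equality. If $v_U>1/2$, the set $\{n_i/2<S_i\le n_iv_U\}$ is nonempty as soon as there is an integer in $(n_i/2, n_iv_U]$; the smallest relevant integer above $n_i/2$ is at distance $\delta$ from $1/2$ after rescaling, where $\delta=\frac1{2n_i}$ if $n_i$ odd and $\frac1{2(n_i-1)}$... actually the relevant bound, maximised over $i$, is exactly the $\delta_0$ defined in the statement (using $n_0=\max_i n_i$, since larger $n_i$ gives a finer grid). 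So the set is nonempty for some $i$ precisely when $v_U>1/2+\delta_0$ is false is wrong — rather, combining over all individuals, strict inequality in sensitivity$(\widehat T_{A,i})$ for the pooled/relevant $i$ occurs iff $v_U>1/2+\delta_0$; when $1/2<v_U\le 1/2+\delta_0$ the grid of the coarsest individual may miss the interval. I would make this precise by noting that for a single $i$, strictness holds iff there is an integer $s$ with $n_i/2<s\le n_iv_U$ and $\p(S_i=s\mid T_i=1)>0$, the latter always true since $1-q\in(0,1)$ by (H\ref{hhalf}); then translate "$\exists$ such integer for the extremal $n_i$" into the threshold $v_U>1/2$ for a generic statement and the sharper $v_U>1/2+\delta_0$ for the worst grid. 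The symmetric argument with $\p(S_i=s\mid T_i=0)>0$ (true since $p\in(0,1)$) handles specificity and the $v_L<1/2$, resp. $v_L<1/2-\delta_0$, conditions.

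Finally, the consequence for misclassification rate and informedness is purely formal: informedness is $\operatorname{sensitivity}+\operatorname{specificity}-1$, which is nondecreasing in each summand, so the two inequalities above give $\operatorname{informedness}(\widehat T_{A,i})\le\operatorname{informedness}(\widehat T_{M,i})$; and the misclassification rate — with the indecision response counted as an error via the loss $\ell_{b=c=1,a=d=0}$ — decomposes as $\theta_T(1-\operatorname{sensitivity})+(1-\theta_T)(1-\operatorname{specificity})$ up to how indecision is scored, and is therefore also smaller for the median classifier. I would spell out the decomposition in one line and invoke monotonicity. The main obstacle is bookkeeping at the boundary: getting the tie case $S_i=n_i/2$ on the right side of each inequality, and correctly identifying when the one-integer-wide gap near $1/2$ is actually charged, which is where the precise constant $\delta_0$ and the distinction between "strict for some $i$" versus "strict with the coarsest grid" enter — everything else is a pointwise event inclusion plus the fact that the binomial weights are strictly positive under (H\ref{hhalf}).
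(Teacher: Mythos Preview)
Your proposal is correct and follows essentially the same route as the paper: both arguments reduce the comparison to the pointwise event inclusions $\{S_i>n_iv_U\}\subseteq\{S_i>n_i/2\}$ and $\{S_i<n_iv_L\}\subseteq\{S_i<n_i/2\}$, which hold because $v_L\le 1/2\le v_U$ by (H\ref{hv}). Your treatment is in fact more complete than the paper's, since you spell out why the binomial puts strictly positive mass on the gap under (H\ref{hhalf}) and you actually argue the final clause about misclassification rate and informedness, which the paper's proof omits.
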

The proof is given in \ref{sec:proof_accuracy_AM_classifiers}. The theorem states that the median-based classifier is better than the average-based classifier regarding sensitivity and specificity when we introduce the inclusive response, i.e., as soon as $v_L<0.5<v_U$.
Both classifiers reflect the properties of their respective scoring. Hence, Theorem~\ref{thme:accuracy_AM_classifiers} yields a first conclusion on the efficiency of the scoring methods in favor of the median-based scoring.

We also obtained efficiency results on the 
Bayesian classifier. To state it, we must refer to the loss function $\ell(t,\widehat t)$ defined in Section~\ref{sec:loss}. Bayesian statistics, which is well grounded in decision theory, is known to be efficient in the sense that it provides the best possible estimators and classifiers given the data if the statistics are computed wisely with the posterior distribution, see, e.g.,~\cite{robert2007bayesian}. Here, we can prove the following results.
\begin{thme}\label{thme:accuracy_bayesian_classifiers}
    Assume {\em(H\ref{hhalf})} and {\em(H\ref{hloss})}. Consider any classifier $\widehat T_i$ of the $i$-th individual that returns a decision in $\{0, 1/2, 1\}$, based on the data $S_1,\ldots, S_n$.
    \begin{enumerate}[(i)]
        \item \label{item1} {\em (Optimality of the likelihood-based classifier)}
        Whatever the values of $(\theta_T, p, q)$, we have
        \[
            \e(\ell(T_i,\widehat T_{L,i}(\theta_T,p,q))) \le \e(\ell(T_i,\widehat T_{i})).
        \]
        \item \label{item2}
        {\em (Bayesian optimality of the Bayesian classifier)}
        We have
        \[
            \e_\pi(\ell(T_i,\widehat T_{B,i})) \le \e_\pi(\ell(T_i,\widehat T_{i})).
        \]
        \item \label{item3}{\em (Admissibility of $\widehat T_{B,i}$)}
        If on a set of values of $(\theta_T,p,q)$ with positive prior probability we have
        \[
            \e\Big[\ell\big(T_i,\widehat T_{i}\big)\Big] <
            \e\Big[\ell\big(T_i,\widehat T_{B,i}\big)\Big],
        \]
        then there exists another set of values of $(\theta_T, p, q)$ with positive prior probability for which the inequality is reversed (strictly).
    \end{enumerate}
\end{thme}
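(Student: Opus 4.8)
All three parts follow from Lemma~\ref{lem:loss} applied \emph{conditionally}, so the plan is first to recast each assertion as a statement about conditional risks and then to invoke the lemma. Under (H\ref{hloss}) the loss obeys~\eqref{eq:lemma1}, so by Lemma~\ref{lem:loss} the decision minimizing $\e(\ell(T,\widehat t))$ when $\p(T=1)=\vartheta$ is $\Phi(\vartheta)$, with the thresholds $v_L=a/(c-(d-a))$ and $v_U=(b-a)/((d-a)+b)$; I take these to be the thresholds defining $\widehat T_{L,i}$ and $\widehat T_{B,i}$. The step that requires the most care is the bookkeeping: one must use the conditional-independence structure of the model so that conditioning on $S_i$ (resp.\ on the whole data $S_1,\ldots,S_n$) captures everything the classifier under comparison knows about $T_i$; once this is set up correctly, the rest is the tower property and, for part~(iii), an elementary sign argument for an integral against the prior.

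\textbf{Parts (i) and (ii).} Let $\widehat T_i$ be any classifier, i.e.\ a (possibly randomized) map of $(S_1,\ldots,S_n)$; conditioning on the data and, if necessary, on its external randomization makes $\widehat T_i$ a constant. For part~(i), fix $(\theta_T,p,q)$. Given the parameters the pairs $(T_i,S_i)$ are independent across $i$, whence $T_i\perp(S_j)_{j\ne i}\mid S_i$ and
\[
\e\big(\ell(T_i,\widehat T_i)\big)=\e\Big(\e\big(\ell(T_i,\widehat T_i)\mid S_i\big)\Big).
\]
By~\eqref{eq:Y_L}, $\p(T_i=1\mid S_i=s_i)=Y_{L,i}(\theta_T,p,q)$, so Lemma~\ref{lem:loss} with $\vartheta=Y_{L,i}(\theta_T,p,q)$ gives, for every fixed data value,
\[
\e\big(\ell(T_i,\widehat T_i)\mid S_i=s_i\big)\ \ge\ \e\big(\ell(T_i,\widehat T_{L,i}(\theta_T,p,q))\mid S_i=s_i\big),
\]
since $\widehat T_{L,i}(\theta_T,p,q)=\Phi(Y_{L,i}(\theta_T,p,q))$; integrating yields~(i). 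Part~(ii) is the same computation in the Bayesian model: conditioning on $(S_1,\ldots,S_n)$ and using $\p_\pi(T_i=1\mid S_1,\ldots,S_n)=Y_{B,i}$ (the definition of the Bayesian score), Lemma~\ref{lem:loss} with $\vartheta=Y_{B,i}$ shows the inner conditional risk is minimized at $\Phi(Y_{B,i})=\widehat T_{B,i}$; integrating over the posterior predictive distribution of the data gives~(ii).

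\textbf{Part (iii).} Write $R(v)=\e\big(\ell(T_i,\widehat T_i)\mid(\theta_T,p,q)=v\big)$ and $R_B(v)=\e\big(\ell(T_i,\widehat T_{B,i})\mid(\theta_T,p,q)=v\big)$ for the frequentist risks at $v=(\theta_T,p,q)$. Since $\ell$ is bounded these are $\pi$-integrable and, by Fubini, $\e_\pi(\ell(T_i,\widehat T_i))=\int R\,d\pi$ and $\e_\pi(\ell(T_i,\widehat T_{B,i}))=\int R_B\,d\pi$; part~(ii) thus reads $\int(R-R_B)\,d\pi\ge 0$. Suppose now that $R<R_B$ on a set $A$ with $\pi(A)>0$, i.e.\ $\e(\ell(T_i,\widehat T_i))<\e(\ell(T_i,\widehat T_{B,i}))$ there. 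Then $\int_A(R-R_B)\,d\pi<0$, so $\int_{A^c}(R-R_B)\,d\pi>0$, and hence the set $B=\{v\in A^c:\ R(v)>R_B(v)\}$ has $\pi(B)>0$; on $B$ one has $\e(\ell(T_i,\widehat T_{B,i}))<\e(\ell(T_i,\widehat T_i))$, which is the stated reversed (strict) inequality.
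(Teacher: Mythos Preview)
Your proof is correct and follows essentially the same approach as the paper: apply Lemma~\ref{lem:loss} conditionally on $S_i$ (for~(i)) and on $(S_1,\ldots,S_n)$ under $\e_\pi$ (for~(ii)), then integrate, and deduce~(iii) from~(ii) via the decomposition $\int(R-R_B)\,d\pi\ge 0$. Your write-up is in fact more careful than the paper's, making explicit the conditional-independence structure, the boundedness needed for Fubini, and the direct sign argument for admissibility.
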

The proof is given in~\ref{sec:proof_accuracy_Bayesian_classifiers}. Note that $\e$ is the expected value given the fixed parameters $\theta_T$, $p$, and $q$, whereas $\e_\pi$ integrates them according to the prior distribution of density $\pi$. Item~\eqref{item1} states that the likelihood-based classifier is optimal regarding risk, i.e., expected loss at fixed values of $\theta_T$, $p$, and $q$.
Item~\eqref{item2} considers the Bayesian risk, which is the expected loss integrated over the prior distribution of the fixed parameters. The Bayesian classifier is optimal in this sense.
As stated in item~\eqref{item3}, the Bayesian classifier is admissible: no other classifier can outperform it uniformly over the entire set of fixed parameter values. Because of these results, we recommend using the Bayesian classifier in practice.

\subsection{Accuracy of the prevalence estimators}\label{sec:accuracy_prevalence}
As defined in Equation~\eqref{Eq:PrevEst}, we have four prevalence estimators: the average-based, the median-based, the Maximum-A-Posteriori, and the Bayesian prevalence estimators. The latter $\widehat\theta_{T, B}$ is the expected value of the posterior distribution of the prevalence $\theta_T$ given the data. In contrast, the former three $\widehat\theta_{T, K}$, with $K = \{A, M, \MAP\}$, are the empirical means of the associated scores $Y_{K, i}$ for $i=1, \cdots {\nbInd}$.

We first consider the two empirical means $\widehat\theta_{T, A}$ and $\widehat\theta_{T, M}$. They are heavily biased, with a bias that does not tend to $0$ as the number of individuals increases. On the other hand, their variances are proportional to $1/n$, see~\ref{sec:prevalence_variance}. Thus, asymptotically, their squared biases dominate their mean squared errors. Moreover, we can compare their bias as follows.
\begin{thme}
\label{thme:bias_AM}{\em (Bias of $\widehat\theta_{T, A}$ and $\widehat\theta_{T, M}$)}
    Assume {\em(H\ref{hhalf})} and set $n_0 = \min\{n_i, i=1,\ldots,{\nbInd}\}$.

    For any  values of $p$ and $q$ in $(0,1/2)$, there exists an interval $J$ that contains $p/(p+q)$ such that
    \[
        |\e\big(\widehat\theta_{T, M}\big) - \theta_T| \le |\e\big(\widehat\theta_{T, A}\big) - \theta_T| \, ,
    \]
    except if $\theta_T\in J$.

    The bias of $\widehat\theta_{T, A}$ is not influenced by the numbers of replicates $n_i$'s.
    Whereas, as $n_0\to\infty$, then the bias of $\widehat\theta_{T, M}$ tends to $0$ and
    the length of the interval $J$ tends to $0$.
\end{thme}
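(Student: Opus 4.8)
The plan is to compute both biases in closed form, to notice that \emph{each is an affine function of} $\theta_T$, and then to read the comparison off the sign of a product of two affine functions; the interval $J$ will be the segment between their two roots. \emph{Step 1: the two biases.} Conditioning on $T_i$ and using~\eqref{eq:SgivenT}, $\e(Y_{A,i})=\e(S_i/n_i)=\theta_T(1-q)+(1-\theta_T)p$, so $\beta_A(\theta_T):=\e(\widehat\theta_{T,A})-\theta_T=p-\theta_T(p+q)$: this is affine, free of the $n_i$'s, and vanishes exactly at $\theta_T=p/(p+q)$, which already proves the statement about $\widehat\theta_{T,A}$. For the median, set $\mu_0(n)=\p(\mathscr{B}\text{in}(n,p)>n/2)+\tfrac12\p(\mathscr{B}\text{in}(n,p)=n/2)$ and $\mu_1(n)=\p(\mathscr{B}\text{in}(n,1-q)>n/2)+\tfrac12\p(\mathscr{B}\text{in}(n,1-q)=n/2)$, so that $\e(Y_{M,i}\mid T_i)=T_i\mu_1(n_i)+(1-T_i)\mu_0(n_i)$ by~\eqref{eq:Y_M1}; writing $\overline\mu_0=\tfrac1\nbInd\sum_i\mu_0(n_i)$ and $\overline\mu_1=\tfrac1\nbInd\sum_i\mu_1(n_i)$ gives $\beta_M(\theta_T):=\e(\widehat\theta_{T,M})-\theta_T=\overline\mu_0+\theta_T(\overline\mu_1-1-\overline\mu_0)$, again affine.

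\emph{Step 2: the key inequality.} Everything hinges on showing that $\mu_0(n)\le p$ for all $n\ge1$ and $p\in(0,1/2)$, with equality iff $n\in\{1,2\}$ and strict inequality for $n\ge3$ --- equivalently, since $\mathscr{B}\text{in}(n,1-q)\overset{d}{=}n-\mathscr{B}\text{in}(n,q)$, that $\mu_1(n)\ge1-q$ with the same equality cases. I would prove this by a telescoping coupling: to compare $\mu_0(n)$ with $\mu_0(n+2)$, add two independent $\mathscr{B}\text{er}(p)$ replicates and split on whether they are both $0$, both $1$, or one of each. In the mixed case the median score is unchanged; the remaining two cases are handled by an explicit cancellation of binomial coefficients that leaves a clean negative multiple of $1-2p$, hence $\mu_0(n+2)<\mu_0(n)$ strictly for $p\in(0,1/2)$. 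Combined with the base values $\mu_0(1)=\mu_0(2)=p$ (mean and median coincide for $n\le2$, as noted in the paper), this gives the claim; and Hoeffding's bound yields $\mu_0(n)\le e^{-2n(1/2-p)^2}$ and $1-\mu_1(n)\le e^{-2n(1/2-q)^2}$, whence $\overline\mu_0\le e^{-2n_0(1/2-p)^2}$ and $1-\overline\mu_1\le e^{-2n_0(1/2-q)^2}$ since each $n_i\ge n_0$. I expect this step to be the only genuinely delicate point: setting up the two-step coupling and getting the sign of $\mu_0(n+2)-\mu_0(n)$ right for both parities of $n$, together with the exact equality cases $n\in\{1,2\}$.

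\emph{Step 3: the comparison and the interval $J$.} If every $n_i\le2$ then $\beta_M\equiv\beta_A$ and $J=\{p/(p+q)\}$ works trivially; otherwise Step 2 gives $\overline\mu_0<p$ and $\overline\mu_1>1-q$. Then both $\beta_A-\beta_M$ and $\beta_A+\beta_M$ are affine, strictly decreasing, positive at $\theta_T=0$ (values $p-\overline\mu_0$ and $p+\overline\mu_0$) and negative at $\theta_T=1$ (values $1-q-\overline\mu_1$ and $\overline\mu_1-1-q$), so each has a unique root $\theta_1,\theta_2\in(0,1)$. Since $|\beta_M|\le|\beta_A|\iff(\beta_A-\beta_M)(\beta_A+\beta_M)\ge0$ and both factors are decreasing, this inequality holds for every $\theta_T$ outside the open interval $(\min(\theta_1,\theta_2),\max(\theta_1,\theta_2))$; thus $J:=[\min(\theta_1,\theta_2),\max(\theta_1,\theta_2)]$ is the required interval. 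That $p/(p+q)\in J$ follows because at $\theta^\star:=p/(p+q)$ we have $\beta_A(\theta^\star)=0$, so the product there equals $-\beta_M(\theta^\star)^2\le0$, placing $\theta^\star$ in the segment between the two roots.

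\emph{Step 4: the limits.} As $n_0\to\infty$, Step 2 gives $\sup_{\theta_T\in[0,1]}|\beta_M(\theta_T)|\le\overline\mu_0+(1-\overline\mu_1)\to0$, which is exactly the statement that the bias of $\widehat\theta_{T,M}$ tends to $0$. For the length of $J$, use that $|\beta_A(\theta)|=(p+q)|\theta-\theta^\star|$ while $\beta_A(\theta_j)=\pm\beta_M(\theta_j)$ for $j=1,2$; hence $|\theta_j-\theta^\star|=|\beta_M(\theta_j)|/(p+q)\le(\overline\mu_0+1-\overline\mu_1)/(p+q)$, so $|J|=|\theta_1-\theta_2|\le 2(\overline\mu_0+1-\overline\mu_1)/(p+q)\to0$. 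Once the lemma of Step 2 is in hand, the rest is routine manipulation of affine functions.
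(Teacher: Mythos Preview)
Your proof is correct and follows the same overall strategy as the paper: compute both biases as affine functions of $\theta_T$, use the monotonicity $\delta(n,p)\le p$ (your $\mu_0$), and invoke Hoeffding for the limit. The execution differs in two places worth noting. First, for the comparison of $|\beta_M|$ and $|\beta_A|$, the paper simply points to a figure of two piecewise-linear curves and reads the interval $J$ off the picture, whereas you give the clean algebraic device $|\beta_M|\le|\beta_A|\iff(\beta_A-\beta_M)(\beta_A+\beta_M)\ge0$ and locate the roots of the two decreasing affine factors; this makes the argument self-contained and also disposes of the degenerate case $n_i\le2$ for all $i$, which the paper handles by tacitly adding~(H\ref{h3replicates}) in its proof. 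Second, the paper establishes the monotonicity of $\delta(k,p)$ one step at a time (showing $\delta(2m)=\delta(2m-1)$ and then $\delta(2m+1)<\delta(2m)$ by adding a single Bernoulli), while your two-step coupling goes from $n$ to $n+2$ directly; the content is the same. Your Step~4 additionally yields the explicit bound $|J|\le 2(\overline\mu_0+1-\overline\mu_1)/(p+q)$, which the paper does not state.
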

The proof is given in~\ref{sec:proof_prevalence}. The theorem states that the median-based prevalence estimator $\widehat\theta_{T, M}$ is better than the average-based prevalence estimator $\widehat\theta_{T, A}$ in terms of bias, except when $\theta_T$ is in an interval $J$. And the length of $J$ is small when the number of replicates is always large. In the latter case, we can rely on the median-based $\widehat\theta_{T, M}$ to estimate the prevalence. Otherwise, both estimators are heavily biased and should be used with caution.

As always, Bayesian statistics come with its efficiency. In terms of mean squared error, we can prove the following results on the Bayesian prevalence estimator $\widehat{\theta}_{T, B}$.
\begin{thme}\label{thme:accuracy_Bayesian_prevalence}
    Assume {\em(H\ref{hhalf})}, and consider an estimator $\widehat\theta_T$ of $\theta_T$, that is to say any function of the data $S_1,\ldots, S_{\nbInd}$.
    \begin{enumerate}[(i)]
        \item {\em (Bayesian optimality of $\widehat \theta_{T,B}$)}
        We have
        \[
            \e_\pi\Big[(\widehat \theta_{T,B} -\theta_T)^2\Big] \le \e_\pi\Big[(\widehat \theta_T -\theta_T)^2\Big].
        \]
        \item {\em (Admissibility of $\widehat \theta_{T,B}$)}
        If on a set of values of $(\theta_T,p,q)$ with positive prior probability we have
        \[
            \e\Big[\big(\widehat \theta_T -\theta_T\big)^2\Big] <
            \e\Big[\big(\widehat \theta_{T,B} -\theta_T\big)^2\Big],
        \]
         then there exists another set of values of $(\theta_T, p,q)$ with positive prior probability for which the inequality is reversed (strictly).
    \end{enumerate}
\end{thme}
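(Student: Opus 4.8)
The plan is to invoke the standard Bayesian decision-theoretic argument for quadratic loss, in complete parallel with the proof of Theorem~\ref{thme:accuracy_bayesian_classifiers} but with the loss $\ell$ replaced by the squared error. Write $S=(S_1,\ldots,S_{\nbInd})$ for the data. Since $\theta_T\in[0,1]$ and $\widehat\theta_{T,B}=\e_\pi(\theta_T\mid S)\in[0,1]$, and since clipping any competing estimator $\widehat\theta_T$ to $[0,1]$ can only decrease $(\widehat\theta_T-\theta_T)^2$ pointwise for every $\theta_T\in[0,1]$ — hence can only decrease its frequentist risk at every $(\theta_T,p,q)$, which makes both the hypothesis and the conclusion of part~(ii) transfer between $\widehat\theta_T$ and its clipped version — I may assume throughout that $\widehat\theta_T$ takes values in $[0,1]$. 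Every risk in sight is then bounded by $1$, so all expectations are finite and the tower property and Fubini apply freely. Moreover the Beta priors on $\theta_T,p,q$ are proper, so $\widehat\theta_{T,B}$ is well defined for every realisation of $S$.

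For part~(i) I would condition on $S$ and use that $\widehat\theta_{T,B}=\e_\pi(\theta_T\mid S)$. Decomposing $\widehat\theta_T-\theta_T=(\widehat\theta_T-\widehat\theta_{T,B})+(\widehat\theta_{T,B}-\theta_T)$ and expanding the square, the cross term equals $2(\widehat\theta_T-\widehat\theta_{T,B})\,\e_\pi(\widehat\theta_{T,B}-\theta_T\mid S)$, which vanishes because $\widehat\theta_T$ and $\widehat\theta_{T,B}$ are $\sigma(S)$-measurable and $\e_\pi(\theta_T\mid S)=\widehat\theta_{T,B}$. Hence
\[
\e_\pi\big[(\widehat\theta_T-\theta_T)^2\mid S\big]=\big(\widehat\theta_T-\widehat\theta_{T,B}\big)^2+\e_\pi\big[(\widehat\theta_{T,B}-\theta_T)^2\mid S\big]\ \ge\ \e_\pi\big[(\widehat\theta_{T,B}-\theta_T)^2\mid S\big],
\]
and taking $\e_\pi$ of both sides yields the asserted inequality (this is just optimality of the posterior mean in $L^2$).

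For part~(ii) I would argue by contradiction from part~(i). Suppose the conclusion fails, i.e.\ there is no set of $(\theta_T,p,q)$ of positive prior probability on which $\e[(\widehat\theta_{T,B}-\theta_T)^2]<\e[(\widehat\theta_T-\theta_T)^2]$; equivalently, the frequentist-risk inequality $\e[(\widehat\theta_T-\theta_T)^2]\le\e[(\widehat\theta_{T,B}-\theta_T)^2]$ holds for $\pi$-almost every $(\theta_T,p,q)$. Integrating this against the prior, and using that by hypothesis the inequality is \emph{strict} on a set of positive prior probability, gives $\e_\pi[(\widehat\theta_T-\theta_T)^2]<\e_\pi[(\widehat\theta_{T,B}-\theta_T)^2]$, contradicting part~(i). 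Therefore such a reversal set must exist.

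I do not anticipate a genuine obstacle here, since all quantities are bounded and $\widehat\theta_{T,B}$ is always well defined. The only steps that need a word of care are (a) the reduction to competing estimators valued in $[0,1]$, which keeps the integrated risks finite so that the contradiction driving part~(ii) is meaningful, and (b) spelling out that ``the conclusion of (ii) fails'' is precisely the statement that the reversal set is prior-null, which is what licenses integrating the $\pi$-a.e.\ risk inequality into a strict inequality between the two Bayes risks.
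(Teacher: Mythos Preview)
Your proposal is correct and follows essentially the same route as the paper: the paper's own proof simply says to mimic the argument for Theorem~\ref{thme:accuracy_bayesian_classifiers} with the squared-error loss in place of $\ell$, using that the posterior mean minimizes the posterior $L^2$-risk. Your added care about clipping competitors to $[0,1]$ to keep all risks finite is a nice technical point that the paper omits, but the overall structure (posterior-mean optimality via the orthogonal decomposition for (i), contradiction against (i) by integrating a $\pi$-a.e.\ risk inequality for (ii)) is the same.
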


The proof is given in~\ref{sec:proof_accuracy_Bayesian_prevalence}.
The above Theorem states that $\widehat\theta_{T, B}$ is optimal regarding Bayesian $L^2$-risk, which is the mean squared error integrated over the prior distribution. Moreover, the Bayesian estimator is admissible, which means that there is no other statistic whose mean squared error is always smaller than the one of $\widehat\theta_{T, B}$ whatever the values of the fixed parameters. Additionally, the Bayesian methodology evaluates the uncertainty of the estimated value with credible intervals. These arguments favor the Bayesian prevalence estimator and recommend its use in practice.

\section{Numerical results}
\label{Sec:NumRes}

To evaluate the performance of a method on a specific dataset, we use the empirical ${\overline{\ell}_a}$-risk, defined as
\begin{align*}
    {\overline{\ell}_a}\textrm{-risk} &= \frac{1}{{\nbInd}}  \sum_{i=1}^{\nbInd} {\ell}_a (t_i,\widehat{t}_i),
\end{align*}
\textit{i.e.}, the average of the losses we commit with all decisions taken for each observation $i$, where the loss function ${\ell}_a$ was defined in Section \ref{sec:loss}.

\subsection{Some simulations}
\label{sec:Toy}
\begin{figure}[tb]
    \centering
    \begin{tabular}{cc}
    \rotatebox[origin=c]{90}{$\widehat{\theta}_{T,K}-\theta_T$\hspace{-9cm}}&
    \includegraphics[width=.9\textwidth]{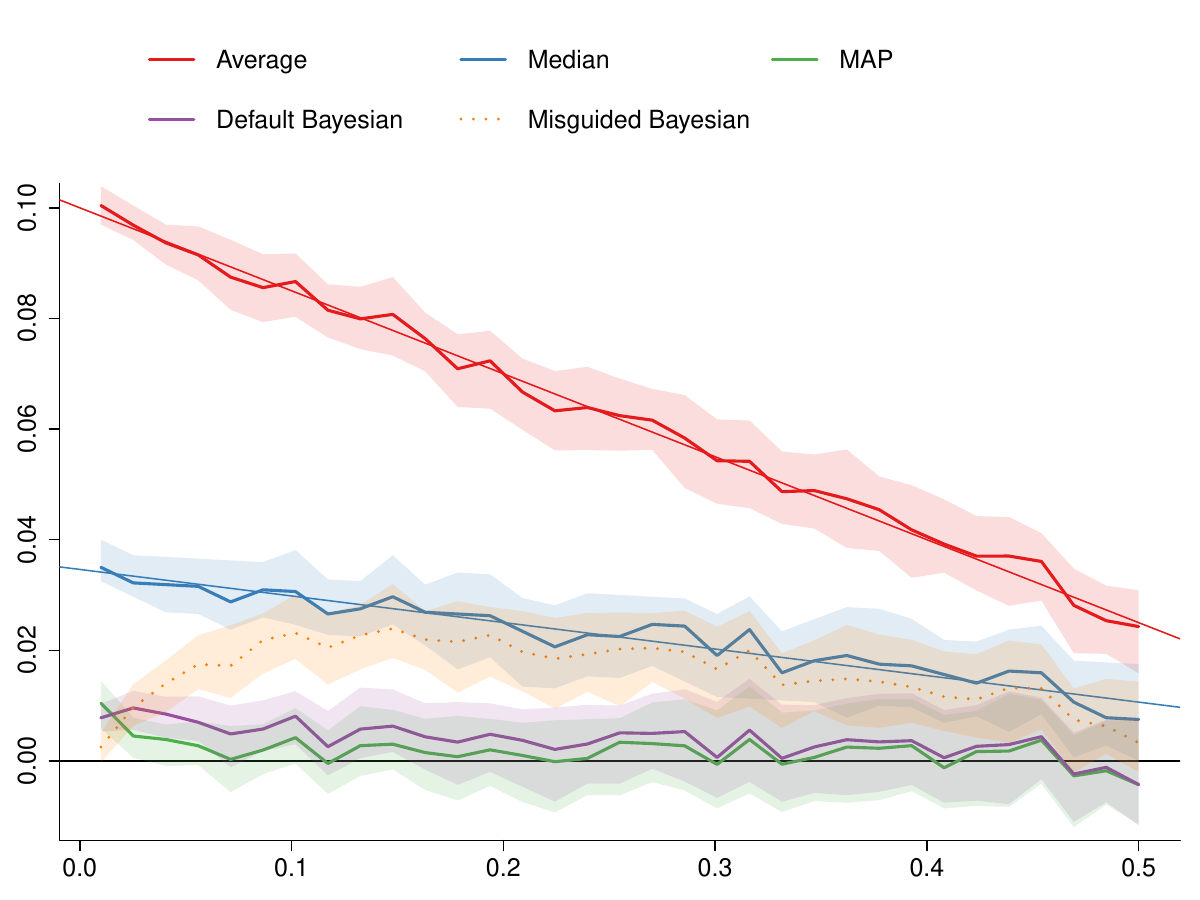}\\
    & $\theta_T$
    \end{tabular}
    \caption{Bias in prevalence estimation for simulated datasets, versus $\theta_T\in[0.01,0.5]$. Medians are plotted in thick lines, and $[0.4,0.6]$-quantile areas are filled with shaded colors, \numberSampleVarThe~datasets have been sampled for each value given to $\theta_T$. For average and median-based approaches, the theoretical bias, computed in \ref{sec:proof_prevalence}, are plotted as thin straight lines. The horizontal black line corresponds with the objective: null error on estimating $\theta_T$.}
    \label{fig:Simulations_varying_thetaT}
\end{figure}
\begin{figure}[tbh]
    \centering
     \begin{tabular}{cc}
     \rotatebox[origin=c]{90}{${\overline{\ell}_a}$-risk\hspace{-9cm}} &
     \includegraphics[width=.9\textwidth]{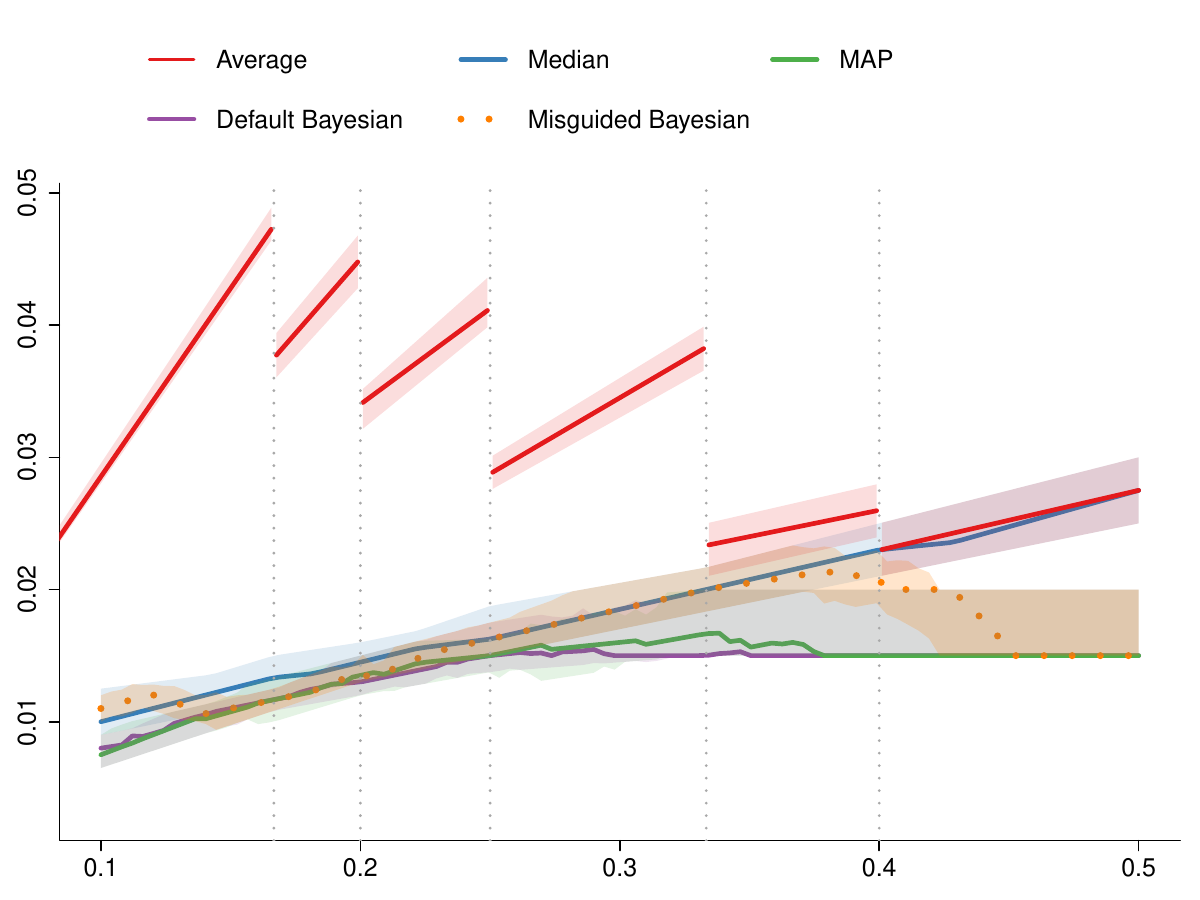}\\
      & $a$
     \end{tabular}
    \caption{Comparison of all methods when the decision cost $a$ varies and $\theta_T=0.4$. Their performance is measured in empirical ${\overline{\ell}_a}$-risk. We run the methods on $300$ simulated datasets for $a\in[0.1,0.5]$. The plain line is the median of the $300$ empirical risks, and the band represents the $[0.4,0.6]$-quantile interval. The average-risk jumps when $a$ is equal to an observed value $\frac{s_i}{n_i}$, where $ n_i \in [\![2,6]\!]$. In other words, jumps occur when $a$ crosses $\{\frac{1}{6}; \frac{1}{5}; \frac{1}{4}; \frac{1}{3}; \frac{2}{5} \}$, represented as vertical dotted lines.}
    \label{fig:ell_risk_simul}
\end{figure}

Simulations are based on the Bayesian model described in Figure~\ref{fig:DAG} where $p=0.1, q=0.05$ and $n_i$'s are sampled through a following uniform distribution $n_i\sim \mathscr{U}_{[\![2,6]\!]}$
and the Table~\ref{tab:priorSimusVarTheta} details how priors are built.

\begin{table}[b]
    \centering
    \begin{tabular}{rcccccc}
	\hline
      	&$a_{FP}$ & $b_{FP}$ & $a_\text{FN}$ & $b_\text{FN}$ & $a_\text{T}$ & $b_\text{T}$  \\
	\hline
        Default prior & 2&2&2&2& 0.5&0.5\\
        Misguided prior & 50&50& 50&50& 0.5&0.5\\
        \hline
    \end{tabular}
    \caption{Hyperparameters values of two different priors used to analyze the simulations. The first one is the default prior given in \ref{sec:Bayesian_scoring}. The second one is an informative prior that is badly chosen on purpose. See Figure~\ref{fig:DAG} for a description of the prior. }
    \label{tab:priorSimusVarTheta}
\end{table}

First, we compare the quality of estimating $\theta_T$ through the different approaches. Parameter $\theta_T$ is evenly sampled between $0.01$ and $0.5$. For each $\theta_T$, datasets of size ${\nbInd}=200$, have been sampled. Figure~\ref{fig:Simulations_varying_thetaT} gives the results of estimating the parameter $\theta_T$, where $\hat{\theta}_{T,K}$ is one of the estimators produced by any of the considered approaches. Medians (thick lines) and quartiles (shaded areas) are represented. The closer each curve is to the straight black line, the better the approach is. As expected from Theorem~\ref{thme:bias_AM}, the sample mean of median scores produces better prevalence estimates than the sample mean of average scores. Moreover, the bias obtained on simulations with the average-based and the median-based approaches follows the linear bias expected from \ref{sec:proof_prevalence}, plotted as thin straight lines. Furthermore, Maximum-A-Posteriori outperforms other approaches. The default Bayesian is close to these optimal methods, followed by the median-based approach. On the contrary, the prevalence value inferred by the average-based method is the worst-performing.

Next Figure~\ref{fig:ell_risk_simul} gives the $\overline{\ell}_a$-risks, for $\theta_T=0.4$, through their median and their $[0.4,0.6]$-quantile area in shaded.
As expected from \ref{sec:App_ellrisk}, the sample mean of median-risks is linear and increasing, whereas the sample mean of average-risks is linear and increasing, piecewise, where jumps occur at all the observed values of $\frac{s_i}{n_i} < 0.5$, when $n_i$ takes values in $[\![2,6]\!]$. In our simulations, the average-based approach gives the poorest results. The Bayesian and MAP approaches are always better than the average and the median solutions. The default Bayesian performs better than the misguided.

In Section~\ref{sec:accuracy_prevalence}, we recommended using the Bayesian prevalence estimator. The numerical results are in agreement with this recommendation.


\subsection{A periodontal dataset}
\label{sec:periodontal}

\begin{figure}
    \centering
    \includegraphics[width=\textwidth]{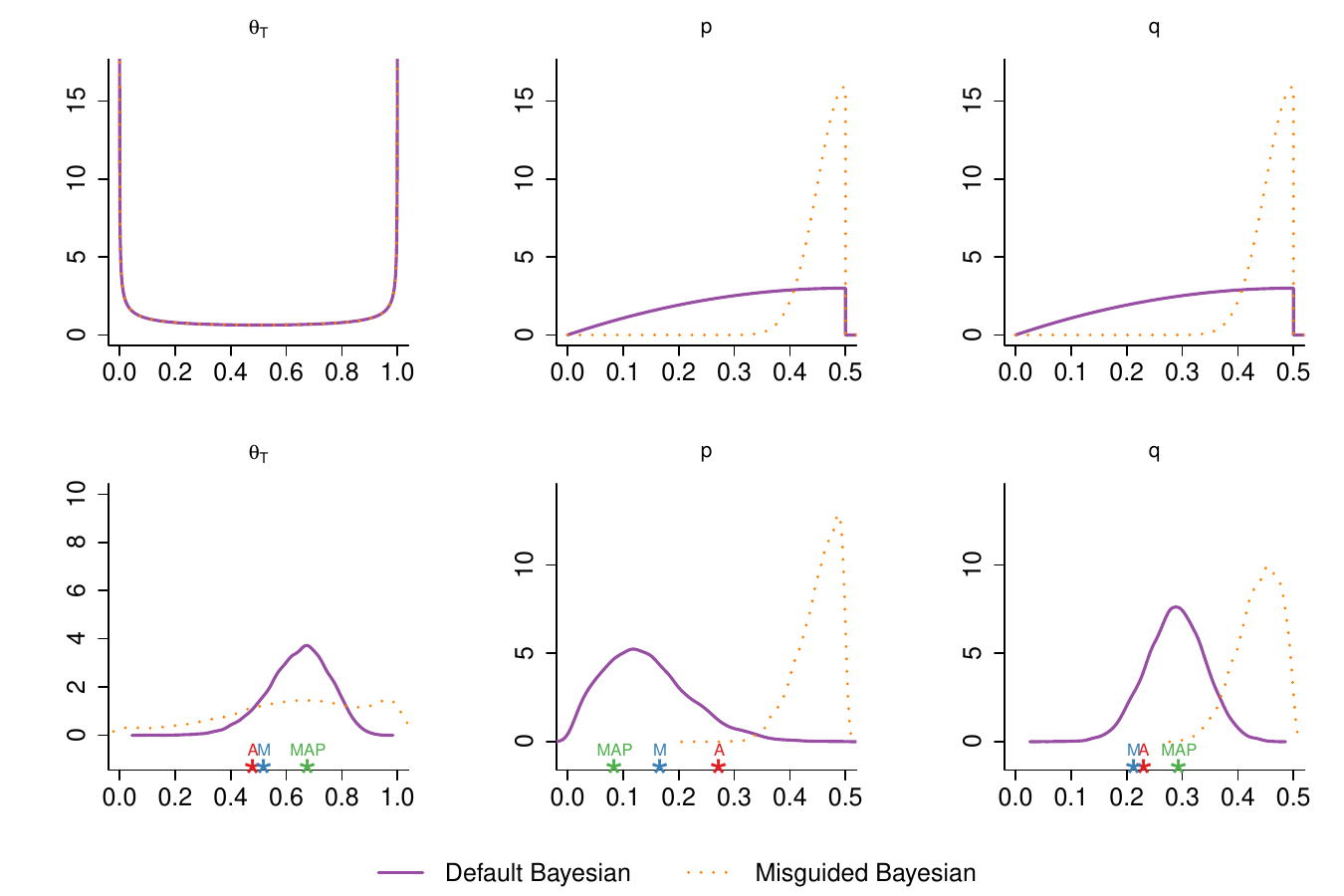}
    \caption{Comparison of the prior distributions (first row) and the associated posterior distributions (second row). From the latent status of each patient, we infer the parameters $\widehat{\theta}_{T, \theo} = 0.58, \ \widehat{p}_{\theo} = 0.187$ and $\widehat{q}_{\theo} = 0.338$. The parameters $\widehat{\theta}_{T, K}, \ \widehat{p}_{K}$ and $\widehat{q}_{K}$, estimated by non-Bayesian methods (K=A, M, \MAP) are displayed below the respective graphs with stars.}\label{fig:periodontal_posterior}
\end{figure}

We tried the proposed methods on the periodontal dataset of~\cite{HML90}, which includes $\nbInd=50$ individuals. The number of replicates per individual, $n_i$, varies from $1$ to $6$. The status variable of the dataset provides a trustworthy diagnostic (\textit{healthy} or \textit{infected}). We used it as the actual value of $T_i$. We ran (1) the average-based method, (2) the median-based method, (3) the Maximum-A-Posteriori estimator approximation as in Algorithm~\ref{algo:em}, (4) the default Bayesian method with the default prior given in Section~\ref{sec:Bayesian_scoring}, which is weakly informative and (5) a misguided Bayesian method, with a poorly chosen informative prior.
The prior and posterior distributions of Bayesian methods on the fixed parameters are displayed in Figure~\ref{fig:periodontal_posterior}.

Since we are in a unique situation where the latent status of each patient, denoted as $T_i$, is known, it becomes possible to infer the theoretical parameters $\theta_T$, $p$, and $q$ as
\begin{equation}
\label{Eq:ParEstTi}
    \widehat{\theta}_{T, \theo} =  \dfrac{1}{{\nbInd}}\sum_{i=1}^{\nbInd} T_{i}, \quad \widehat{p}_{\theo} =  \frac{\sum_{i=1}^{\nbInd} S_i (1-T_i)}{\sum_{i=1}^{\nbInd} n_i (1-T_i)} \ \textrm{ and } \ \widehat{q}_{\theo} =  \frac{\sum_{i=1}^{\nbInd} (n_i - S_i) T_i}{\sum_{i=1}^{\nbInd} n_i T_i}  .
\end{equation}
We can also infer the prevalence in the data set as $\widehat{\theta}_{T, \theo}  = 29/50=0.58$, the false positivity rate as $\widehat{p}_{\theo} = 9/48=0.187$ and the false negativity rate as $\widehat{q}_{\theo} = 48/142 =0.338$. Globally, we see that the default and the misguided Bayesian methods provide very different distributions, illustrating the influence of priors in Bayesian methods. The default Bayesian is almost centered on the values estimated from the latent status $T_i$. Indeed, the modes of the default Bayesian posterior density are $0.67, \, 0.12$ and $0.29$ for $\theta_T, p$ and $q$ respectively. On the contrary, the misguided Bayesian method provides a peaked posterior distribution for $p$ and $q$ that is wholly shifted towards $0.5$ (respective modes are $0.50$ and $0.46$). In contrast, it provides a uniform posterior distribution for $\theta_T$ estimation.

Furthermore, the estimation of parameters $p, q$ and $\theta_T$ provided by the average-based, the median-based and the Maximum-A-Posteriori methods, computed from Equations~\eqref{Eq:PrevEst} and \eqref{Eq:pqEst}, are represented in Figure~\ref{fig:periodontal_posterior} as points below the graphics. Contrary to Bayesian methods that provide a posterior distribution, these methods offer only a point estimate.

Next, we set the thresholds as $v_L=0.45$ and $v_U=1-v_L$ to compute the classifiers. This means an indecision response is given when the scores are too close to 1/2. We compared classifications to the diagnostics given by the status variable in the confusion matrix of Table~\ref{tab:periodontal}. Since the thresholds $v_L$ and $v_U$ are close to $0.5$, the average-based and the median-based classifiers give the same classification. Indeed, in this example we have $n_0=6$, then $a = 0.45 > 1/2 - \delta_0 = 0.4$, as defined in Section~\ref{sec:accuracy_classifiers}.

According to Lemma~\ref{lem:loss}, classification-accuracy is fitted with the loss $\ell_{0.45}(t,\widehat t)$. From the empirical risks given in Table~\ref{tab:periodontal}, we conclude that the default Bayesian method is the best, and the misguided Bayesian method is the worst. The misguided method has a bad $\ell_{0.45}$-risk because it diagnoses all patients as infected, whether they are infected or not. Default Bayesian and Maximum-A-Posteriori methods provide the same decisions, except for $3$ healthy individuals. For those three individuals, the default method is in favor of $1/2$ (inconclusive), whereas the Maximum-A-Posteriori method is in favor of $1$. Since predicting $1$ for a healthy patient is a larger error than predicting $1/2$, the empirical risk of the default Bayesian method ($10.2$) is smaller than that of the Maximum-A-Posteriori method ($11.9$).

\begin{table}[bt]
    \caption{Classification of the periodontal dataset. The table on the left counts how many times a decision has occurred given the method and the status of the individual. The thresholds were set to $v_L=a=0.45$ and $v_U=0.55$. The table on the right gives the empirical ${\ell}_{0.45}$-risk of each method, corresponding to $a=0.45$.}\label{tab:periodontal}
    \centering
    \vspace*{.2cm}
    \begin{tabular}{llrrr}
    \hline
    Status & Method & $\widehat T = 0$ & $\widehat T = 1/2$ & $\widehat T = 1$ \\
      \hline
Healthy & Average & 16 & 1 & 4 \\
  Healthy & Median & 16 & 1 & 4 \\
  Healthy & MAP & 13 & 0 & 8 \\
  Healthy & Default & 13 & 3 & 5 \\
  Healthy & Misguided & 1 & 7 & 13 \\
  \hline
  Infected & Average & 5 & 5 & 19 \\
  Infected & Median & 5 & 5 & 19 \\
  Infected & MAP & 3 & 2 & 24 \\
  Infected & Default & 3 & 2 & 24 \\
  Infected & Misguided & 1 & 4 & 24 \\
       \hline
    \end{tabular}%
    \hspace{2cm}
    \begin{tabular}{lc}
        \hline
        \multirow{2}{*}{Method}
        & Emp.\\ & $\ell_{0.45}$-risk \\
        \hline
Average & 11.7 \\
  Median & 11.7 \\
  MAP & 11.9 \\
  Default & 10.2 \\
  Misguided & 18.9\\
        \hline
    \end{tabular}
\end{table}

\subsection{A mammogram screening dataset}

\begin{figure}[tb]
    \centering
    \begin{tabular}{ccc}
    \rotatebox[origin=c]{90}{
        $\widehat{Y}_{K,n+1}$  \hspace{-7.5cm}
    }
    &
    \includegraphics[width=.9\textwidth]{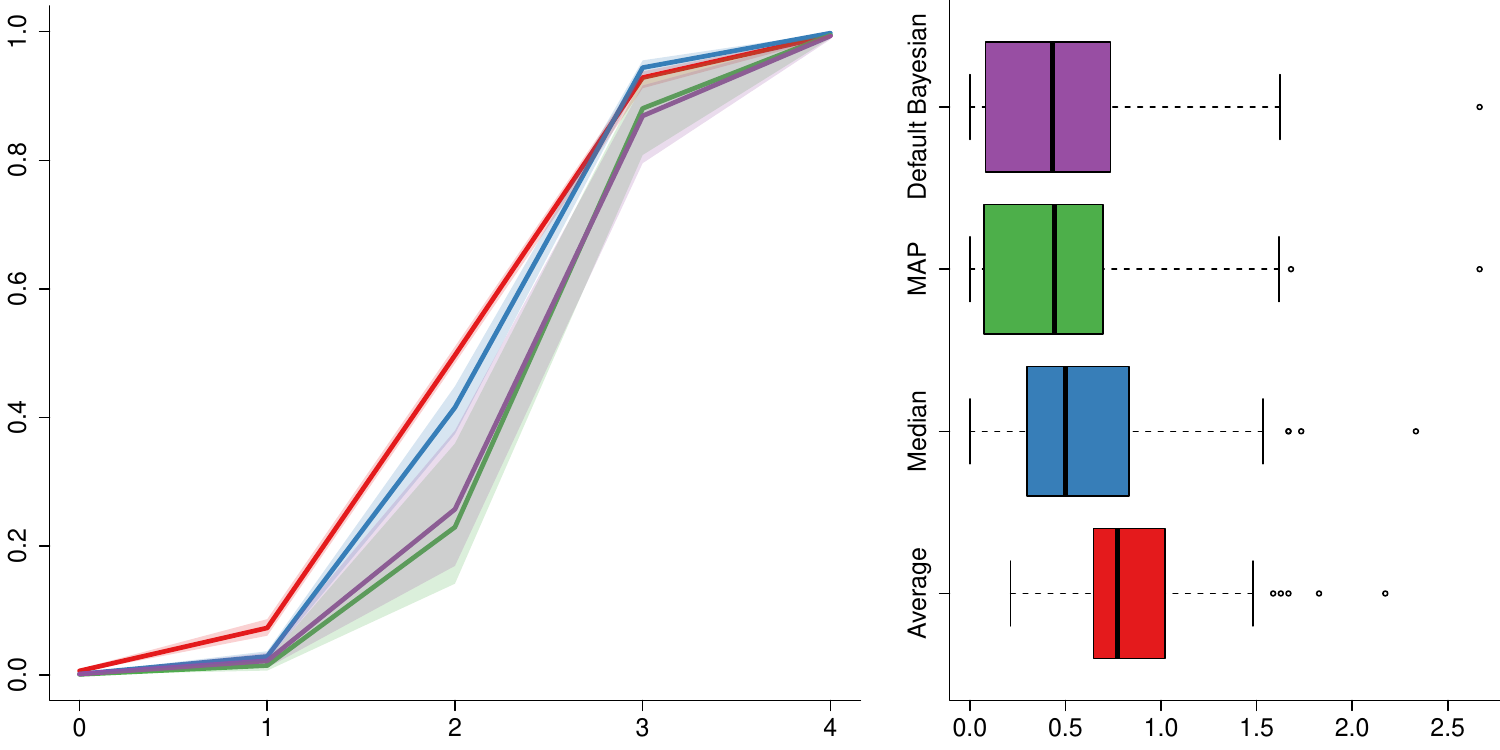}
    \\
    & \hspace{3cm}  $s_i$ \hspace{6cm} $\sum_a{\overline{\ell}_a}$-risk
    \end{tabular}
    \caption{Comparison of prediction scores $\widehat{Y}_{K, n+1}$. In the left subfigure, medians (lines) and quartiles (shaded areas) are represented. In the right subfigure, predisctions performance is measured in terms of empirical risk $\sum_a{\overline{\ell}_a}$-risk, for $a$ varying from $0.15$ to $0.45$. We run the methods on $100$ simulated mammography datasets subsampled with $4$ remaining radiologists.
    \label{fig:mammography_prediction}}
\end{figure}

In \cite{BCS03, KiLe17}, the authors consider ${\nbInd}=148$ women that may suffer or not from breast cancer. A gold-standard diagnostic test gives the infection status of the patient. 64 patients are confirmed cancer cases, and 84 are not affected. $n_{rad}=110$ radiologists read the $n$ mammography films. Further information on this dataset is given in~\ref{app:dataMammo}. Since mammography data are sensitive, they are often proprietary and confidential, generally maintained by the American College of Radiology. An agreement with the ACR is necessary to use complete datasets. We had to run simulations to synthesize complete data with only partial data available.

Moreover, error rates are nonhomogeneous in this dataset, which is out of our model. We have simulated datasets corresponding to this non-homogeneity of the error rates. The complete description of the simulation procedure is given in~\ref{app:simulmMamo}. In mammograms, authorizing a diagnosis of uncertainty is crucial, as it may correspond to situations where radiologists disagree or mammogram images are challenging to interpret. These cases correspond to situations where clinical follow-up (additional tests such as biopsies or a second reading by specialists) is preferable to a categorical decision.

We simulated $100$ datasets with a number of radiologists set to $n_i=4$. For each simulation, we used $90 \%$ of the $110$ mammograms to estimate $(\theta_{T,K}, p_K, q_K)$ for $K=A,M, \MAP$ and $B$. Next we computed the predictions $\widehat{Y}_{K, n+1}$, as described in Equations~(\ref{Eq:PredAMMAP}) and (\ref{Eq:PredB}), for the $15$ remaining mammograms. Figure~\ref{fig:mammography_prediction} displays the prediction scores versus the observed number of positive replicates $s_i=0, 1 \cdots, 4$. For each simulation, we additionally compute $\sum_a{\overline{\ell}_a}$-risk, for $a$ varying from $0.15$ to $0.45$. The default Bayesian and the Maximum-A-Posteriori methods outperform the others, whereas the average-based method provides the worst empirical risk.

\bigskip

\section{Conclusion and perspectives}
We have demonstrated the limitations of averaging binary technical replicates and highlighted the advantages of alternative methods, particularly the Bayesian approach. The median offers a simple yet more robust alternative to the mean, while the Bayesian method provides a comprehensive framework for incorporating prior knowledge and quantifying uncertainty. We also proposed a maximum penalized-likelihood approach, which can be seen as a MAP method. The MAP is a compromise between the simple median approach and the more comprehensive Bayesian framework, offering a balance between the median's computational efficiency and the Bayesian method's accuracy. These alternatives lead to more nuanced classifications, allowing for an "indecisive" outcome when the evidence is insufficient, a crucial aspect in medical diagnostics. The theoretical analysis and empirical results from both simulated and real-world medical datasets confirm the superior performance of these alternatives to the average regarding accuracy and reliability. The Bayesian method, by providing credible intervals, not only offers point estimates but also a measure of confidence, which is essential for informed decision-making in clinical practice and epidemiological studies. The predictive component of these methods provides easy-to-use guiding rules for practitioners to get a confident diagnostic based on newly observed data and the information drawn from the dataset. Indeed, a simple table indicates the predictive classification based on the numbers of replicates and positive replicates in the new data. Finally, the user-friendly R-package implementing these methods will facilitate their wider adoption and contribute to more robust and reliable analyses of binary replicate data.

This study was a first step towards theoretical results and has not incorporated any covariates so far. Future research will consider covariates to better classify the individuals, even if experts use some of them to make their decisions, instead of focusing solely on the decisions derived from these covariates. This approach would help reduce the variation among different experts, a common issue observed in the mammography screening dataset and provide mathematically proven reconciliations of independent diagnostics.

\bibliographystyle{plain}
\bibliography{filtered_references}


\appendix

\section{Technical results}\label{sec:technical}

\subsection{Full description of the maximum-a-posteriori EM estimation algorithm}\label{app:EM}

Recall the likelihood according to Equation~\ref{eq:likelihood}:
\begin{align*}
    \Lik(\theta_T, p, q|s_1,\ldots,s_n) &= \prod_{i=1}^n \left(\theta_T (1-q)^{s_i}q^{n_i-s_i} + (1-\theta_T) p^{s_i}{(1-p)}^{n_i-s_i}
    \right).
\end{align*}
This function's maximization might stick to border points, corresponding to $\hat{p}=0$ or $\hat{q}=0$, for example. Those points are not expected, and in order to avoid them, we have chosen to use priors on both $p$ and $q$ such that
\begin{align*}
    \pi(p,q)&\propto p(1-p)q(1-q),
\end{align*}
corresponding to $p,q~\sim\mathscr{B}\text{eta}(2,2)$, corresponding to weakly informative priors. The posterior distribution of the parameters is thus proportional to
\begin{align}
     \pi(\theta_T, p, q, s_1,\ldots,s_n) &=
     p(1-p)q(1-q)\prod_{i=1}^n\left(
    \theta_T (1-q)^{s_i}q^{n_i-s_i} + (1-\theta_T) p^{s_i}{(1-p)}^{n_i-s_i}
    \right).\label{eq:posteriorEM}
\end{align}
The following details the EM algorithm to maximize the problem's likelihood. Rather than maximizing the likelihood, the objective is to maximize the posterior distribution of the parameter, but the EM algorithm can be used the same way.

The EM algorithm is an iterative procedure alternating between the Estimation (E) and Maximization (M) steps. The E-step computes the expected value of the latent variables $T_i$'s given the observed data $s_i$'s and the current values of the fixed parameters. The M-step maximizes the completed 
posterior distribution
of the fixed parameters given the expected values of the latent variables. The completed
log-posterior distribution
of the fixed parameters $\theta_T$, $p$, and $q$ given the observed data $(s_1,\ldots, s_n)$ and the latent $(y_1,\ldots,y_n)$ can be extracted from \eqref{eq:Bayes1} and is, up to an additive constant,
\begin{align*}
    \pi_c(\theta_T, p, q,s_1,\ldots,s_n, y_1,\ldots, y_n)
    &= \sum_{i=1}^n \Bigg\{y_i\Big(
        s_i\log(1-q) + (n_i-s_i)\log(q) + \theta_T
    \Big) \\
    &\ +
    (1-y_i)\Big(
        s_i\log(p) + (n_i-s_i)\log(1-p) + (1-\theta_T)
    \Big) \Bigg\},\\
    &\ + \log(p(1-p)) + \log(q(1-q)).
\end{align*}
We can easily optimize the above completed log-posterior distribution 
in $\theta_T$, $p$ and $q$ given the values of $(s_1,\ldots, s_{\nbInd})$ and $(y_1,\ldots,y_{\nbInd})$. This means that the M-steps of the EM algorithm are explicit:
\begin{align}
    {\theta}_\text{T,M-step} &= \dfrac{1}{{\nbInd}}\sum_{i=1}^{\nbInd} y_i,\quad
    {p}_\text{M-step} = \dfrac{1+\sum_{i=1}^{\nbInd} s_i(1-y_i)}{2+\sum_{i=1}^{\nbInd} n_i(1-y_i)},\quad
    {q}_\text{M-step} = \dfrac{1+\sum_{i=1}^{\nbInd} (n_i-s_i)y_i}{2+\sum_{i=1}^{\nbInd} n_iy_i}.\label{equ:mlM}
\end{align}
The value of $y_i$ plugged in the M-step is the expected value of the latent variable $T_i$ given the observed data $s_i$ and the current values of the fixed parameters. (Note that those values of $y_i$ are now real numbers between $0$ and $1$ exactly as our different scores.) This is the result of the E-step in the EM algorithm. It is also explicit: since $T_i$ is a binary variable, this conditional expected value is equal to the probability of $T_i=1$ given the observed data $s_i$ and the current values of the fixed parameters. We can recognize the definition of the likelihood-based score $Y_{L,i}(\theta_T, p, q)$ given in~\eqref{eq:Y_L}. Thus, the E-step of the EM algorithm is
\begin{align}
{y_i}_\text{E-step} = Y_{L,i}(\theta_T, p, q), \label{equ:mlE}
\end{align}
where we use the last values of the fixed parameters $\theta_T$, $p$, and $q$ available during the iterative procedure.

As with many numerical optimization algorithms, the EM algorithm is sensitive to the initial values. We thus have to run the EM algorithm several times. Each time, we initialize the latent $y_i$'s by drawing them at random from
\begin{align}
y_i \sim \mathscr{B}\text{eta}(s_i + 1/2, n_i-s_i+1/2).\label{equ:mlinit}
\end{align}
We then run the EM algorithm starting with an M-step until convergence. We compare the results obtained from the several runs by using the posterior distribution 
given in~\eqref{eq:posteriorEM}, and the best model is chosen.

The posterior distribution 
of our mixture model in~\eqref{eq:posteriorEM} suffers from the label-switching problem; see Chapter 1 of~\cite{fruhwirth2019handbook}. Indeed, there is a symmetry in the posterior: we can exchange the $0$ and $1$ labels in the latent space and get the same posterior 
value with $\theta_T^\ast=1-\theta_T$, $p^\ast=1-q$ and $q^\ast=1-p$ as the new fixed parameters. However, we have assumed that the replicates are noisy measurements of the true value $T_i$. Therefore, we have set the constraint that the false-positivity rate $p$ and false-negativity rate $q$ are smaller than $1/2$. This constraint gives us a relabelling of the latent $y_i$'s that removes entirely the label-switching problem when applying the EM algorithm to our mixture model: each time the constraint on $p$ is violated, we use the symmetry described above, switch the $0$ and $1$ and change the current values of the fixed parameters to $1-\theta_T$, $1-q$ and $1-p$ and the current values of all $y_i$'s to $1-y_i$.
Algorithm~\ref{algo:em} synthesizes the complete estimation procedure.

\begin{algorithm}
\caption{Maximum-A-Posteriori estimator approximation of the fixed parameters}
\label{algo:em}
\begin{algorithmic}
\For {$r\in[\![1,R]\!]$}
    \State $\forall i\in[\![1,{\nbInd}]\!]$, initialize ${y}_\text{r,i}$: ${y}_\text{r,i} \sim \mathscr{B}\text{eta}(s_i + 1/2, n_i-s_i+1/2)$
    \While {Convergence not reached}
        \State Perform M-step:
        $${\theta}_\text{T,EM,r} \gets \dfrac{1}{{\nbInd}}\sum_{i=1}^{\nbInd} {y}_\text{r,i},\quad
    {p}_\text{EM,r} \gets \dfrac{1+\sum_{i=1}^{\nbInd} s_i(1-{y}_\text{r,i})}{2+\sum_{i=1}^{\nbInd} n_i(1-{{y}_\text{r,i}})},\quad
    {q}_\text{EM,r} \gets \dfrac{1+\sum_{i=1}^{\nbInd} (n_i-s_i){y}_\text{r,i}}{2+\sum_{i=1}^{\nbInd} n_i{y}_\text{r,i}}$$
        \If {$q>1/2$}
            \State {$(
            {\theta}_{T, \text{EM,r}},
            {p}_\text{EM,r},
            {q}_\text{EM,r})\gets (1-
            {\theta}_{T, \text{EM,r}},1-
            {p}_\text{EM,r},1-
            {q}_\text{EM,r})$}
        \EndIf
        \State $\forall i\in[\![1,n]\!]$, perform E-step:\quad
        ${y}_\text{r,i} \gets Y_{L,i}(\theta_{T,\text{EM},r}, p_{\text{EM},r}, q_{\text{EM},r})$
    \EndWhile
\EndFor
\State The best model $(\widehat{\theta}_{T, \text{MAP}},\widehat{p}_\text{MAP},\widehat{q}_\text{MAP})$ verifies
\begin{align*}
    (\widehat{\theta}_{T, \text{MAP}},\widehat{p}_\text{MAP},\widehat{q}_\text{MAP}) &=
    \argmax_{r\in[\![1,R]\!]}
    \pi({\theta}_{T, \text{EM,r}},
    {p}_\text{EM,r},
    {q}_\text{EM,r}, s_1,\ldots,s_n)
\end{align*}
\end{algorithmic}
\end{algorithm}

\subsection{Bayesian posterior calculation}\label{app:bayesian}

Using the observed value $(s_1,\ldots, s_{\nbInd})$ of the sufficient statistic $(S_1,\ldots, S_{\nbInd})$, the joint density of the Bayesian model is thus, see Figure~\ref{fig:DAG},
\begin{multline}
    \pi(\theta_T, p, q, y_1,\ldots, t_{\nbInd}, t_1, \ldots, s_{\nbInd}) = \pi(\theta_T, p, q)
    \prod_{i=1}^{\nbInd} \p_\pi\big(T_i=t_i\big|\theta_T\big)\p_\pi \big(S_i=s_i \big| T_i=t_i, p, q\big)
    \\
    = \pi(\theta_T, p, q) \prod_{i=1}^{\nbInd} \theta_T^{t_i}(1-\theta_T)^{1-t_i}
    \binom{n_i}{s_i}\Big\{
        t_i (1-q)^{s_i}q^{n_i-s_i} + (1-t_i) p^{s_i}{(1-p)}^{n_i-s_i}
        \Big\}. \label{eq:Bayes1}
\end{multline}
If we integrate over the latent $T_i$'s, we have the joint density of the observed data and the fixed parameters as
\begin{equation}
    \pi(\theta_T, p, q, s_1,\ldots,s_{\nbInd}) = \pi(\theta_T,p,q)\prod_{i=1}^{\nbInd}\binom{n_i}{s_i}\Big\{
    \theta_T (1-q)^{s_i}q^{n_i-s_i} + (1-\theta_T) p^{s_i}{(1-p)}^{n_i-s_i}
    \Big\}. \label{eq:Bayes2}
\end{equation}
Despite this explicit expression of the joint density, the posterior distribution of the fixed parameters $\theta_T$, $p$, and $q$ given the data $(s_1,\ldots, s_{\nbInd})$ is difficult to compute explicitly. We use the R-package \texttt{rstan}~\cite{Stan24}, which implements a Hamiltonian Monte Carlo algorithm to sample from the posterior distribution of the fixed parameters $\theta_T$, $p$, and $q$ given the data. Latent values of the $T_i$'s are drawn at each iteration of the MCMC algorithm from the predictive distribution of the $T_i$'s given the data $X_{ij}$'s and the current values of the fixed parameters $\theta_T$, $p$, and $q$. The predictive distribution of the $T_i$'s given the data $S_{i}$'s and the current values of the parameters $\theta_T$, $p$, and $q$ is a product of Bernoulli distributions given by~\eqref{eq:TgivenS}.

\subsection{On the likelihood-based scores}\label{sec:Lincrease}

\begin{lem}\label{lem:Lincrease}
    Assume $\theta_T\in(0,1)$, $p\in(0,1/2)$ and $q\in(0,1/2)$.
    The likelihood-based score, defined as in \eqref{eq:Y_L}, is an increasing function of $s$.
\end{lem}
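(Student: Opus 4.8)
The plan is to reduce the claim to the monotonicity of a single geometric term. Write the likelihood-based score of \eqref{eq:Y_L} in the form
\[
Y_{L,i}(\theta_T,p,q) = \frac{A(s)}{A(s)+B(s)} = \frac{1}{1 + B(s)/A(s)},
\]
where $A(s) = \theta_T\, q^{n_i-s}(1-q)^s$ and $B(s) = (1-\theta_T)\, p^s(1-p)^{n_i-s}$. Both $A(s)$ and $B(s)$ are strictly positive because $\theta_T\in(0,1)$ and $p,q\in(0,1/2)$, so the map $x\mapsto 1/(1+x)$ is well defined and strictly decreasing along the relevant range. Hence $Y_{L,i}$ is (strictly) increasing in $s$ if and only if $s\mapsto B(s)/A(s)$ is (strictly) decreasing.

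Next I would compute this ratio explicitly. Collecting the powers of $s$ and of $n_i-s$, a direct manipulation gives
\[
\frac{B(s)}{A(s)} = \frac{1-\theta_T}{\theta_T}\left(\frac{1-p}{q}\right)^{n_i}\left(\frac{p}{1-p}\cdot\frac{q}{1-q}\right)^{s} = c\,\rho^{\,s},
\]
with $c = \frac{1-\theta_T}{\theta_T}\left(\frac{1-p}{q}\right)^{n_i} > 0$ and $\rho = \frac{pq}{(1-p)(1-q)}$. Thus everything is governed by the single base $\rho$.

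The key step --- and the only place the hypotheses $p<1/2$ and $q<1/2$ enter --- is to observe that $p<1/2$ forces $0<\frac{p}{1-p}<1$ and likewise $q<1/2$ forces $0<\frac{q}{1-q}<1$, whence $\rho\in(0,1)$. Therefore $s\mapsto c\,\rho^{\,s}$ is strictly decreasing (whether $s$ ranges over the integers $\{0,\ldots,n_i\}$ or over all real numbers), and consequently $Y_{L,i} = 1/(1+c\,\rho^{\,s})$ is strictly increasing in $s$, which proves the lemma.

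I do not expect a genuine obstacle here: the argument is one short algebraic factorisation followed by an elementary inequality. The only points requiring care are keeping the bookkeeping of the exponents $s$ and $n_i-s$ correct when factoring $B(s)/A(s)$, and recording that strict monotonicity holds, since it is strictness that backs the ``increasing'' wording used around \eqref{eq:Y_L} in the main text. (If one prefers, the same conclusion follows by checking $Y_{L,i}(s+1)>Y_{L,i}(s)$ directly, but the ratio computation above is the most transparent route.)
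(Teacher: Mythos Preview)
Your proof is correct and follows essentially the same route as the paper's: both rewrite the score as a monotone transform of a single geometric term in $s$ and then show that term is decreasing using $p,q<1/2$. Your factorisation into $c\,\rho^s$ with $\rho=\frac{pq}{(1-p)(1-q)}$ is a slightly tidier packaging of the same computation the paper carries out via the comparison $q/(1-p)<(1-q)/p$.
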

\begin{proof}
    It is enough to prove that the following function is increasing in $s$:
    \[
        Y_L(s) = \theta_T{\Bigg(
            \theta_T + (1-\theta_T)\left(\frac{1-p}{q}\right)^{n-s}
            \left(\frac{p}{1-q}\right)^{s}
        \Bigg)}^{-1}.
    \]
    Actually $\displaystyle s\mapsto \left(\frac{1-p}{q}\right)^{n-s}
    \left(\frac{p}{1-q}\right)^{s}$ is a decreasing function of $s$ because
    \begin{align*}
    \left( \frac{q}{1-p} \right)^{n-s+1} \, \left( \frac{1-q}{p} \right)^{s-1}
    <  \left( \frac{q}{1-p} \right)^{n-s} \, \left(
     \frac{1-q}{p} \right)^{s}
     & \iff  \frac{q}{1-p}  <  \frac{1-q}{p} \\
     & \iff  0  <  1 - (p+q).
    \end{align*}
    And the last inequality is true because $p<1/2$ and $q<1/2$.
\end{proof}

\subsection{Proof of Lemma~\ref{lem:loss}}\label{sec:loss_proof}

Since $r(\widehat t)=\e\big[\ell(T,\widehat t)\big]$, we have
\begin{align*}
    r(0) &= \ell(0, 0) + \vartheta(\ell(1, 0) - \ell(0, 0)) = c\vartheta,
    \\
    r(1) &= \ell(0, 1) + \vartheta(\ell(1, 1) - \ell(0, 1)) = b - b \vartheta,
    \\
    r(1/2) &= \ell(0, 1/2) + \vartheta(\ell(1, 1/2) - \ell(0, 1/2))=a+(d-a)\vartheta.
\end{align*}
The three risks are thus affine functions of $\vartheta$, see Figure~\ref{fig:risks}.
\begin{figure}[tb] \centering
    \includegraphics*[width=.8\textwidth]{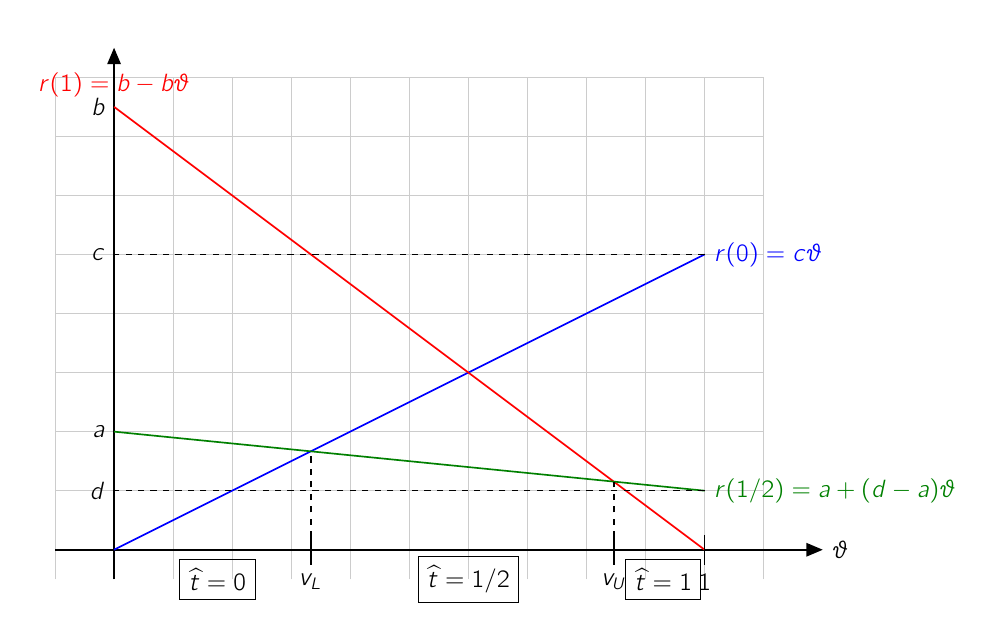}
    \caption{Risks $r\big(\widehat t\big)$ of deciding $\widehat t=0, 1/2$ or $1$ as a function of $\vartheta$. When $\vartheta\in(0, v_U)$, the lowest risk is $r(0)$, thus the best decision is $\widehat t=0$. When $\vartheta\in(v_L, v_U)$, the lowest risk is $r(1/2)$, thus the best decision is $\widehat t=1/2$. When $\vartheta\in(v_U, 1)$, the lowest risk is $r(1)$, thus the best decision is $\widehat t=1$.}\label{fig:risks}
\end{figure}

The best decision is of the desired form if and only if, as functions of $\vartheta$, the three risks $r(0)$, $r(1/2)$ and $r(1)$ intersect each other as in Figure~\ref{fig:risks}. This happens if and only if the two following conditions are satisfied:
\begin{itemize}
    \item[$(i)$] The point where the risks $r(0)$ and $r(1)$ intersect is above the line of $r(1/2)$.
    \item[$(ii)$] The slope of $r(1/2)$ is between those of $r(0)$ and $r(1)$, that is to say in $(-b, c)$.
\end{itemize}
Condition $(ii)$ is clearly equivalent to $-b < (d-a) < c$. Moreover,
the point at which $r(0)$ and $r(1)$ intersect has coordinates
\[
    \left(\frac{b}{b+c}, \frac{bc}{b+c}\right).
\]
And at $\vartheta=b/(b+c)$, the risk $r(1/2)$ is equal to $a+(d-a)b/(b+c)$. Hence, condition $(i)$ is equivalent to the first inequality in the Proposition.
Finally, the values of $v_L$ and $v_U$ are obtained by solving the equations $r(0)=r(1/2)$ and $r(1)=r(1/2)$, respectively.\qed

\subsection{Properties of the Binomial distribution}\label{sec:binomial}
\begin{figure}
    \centering
    \includegraphics[width=.7\textwidth]{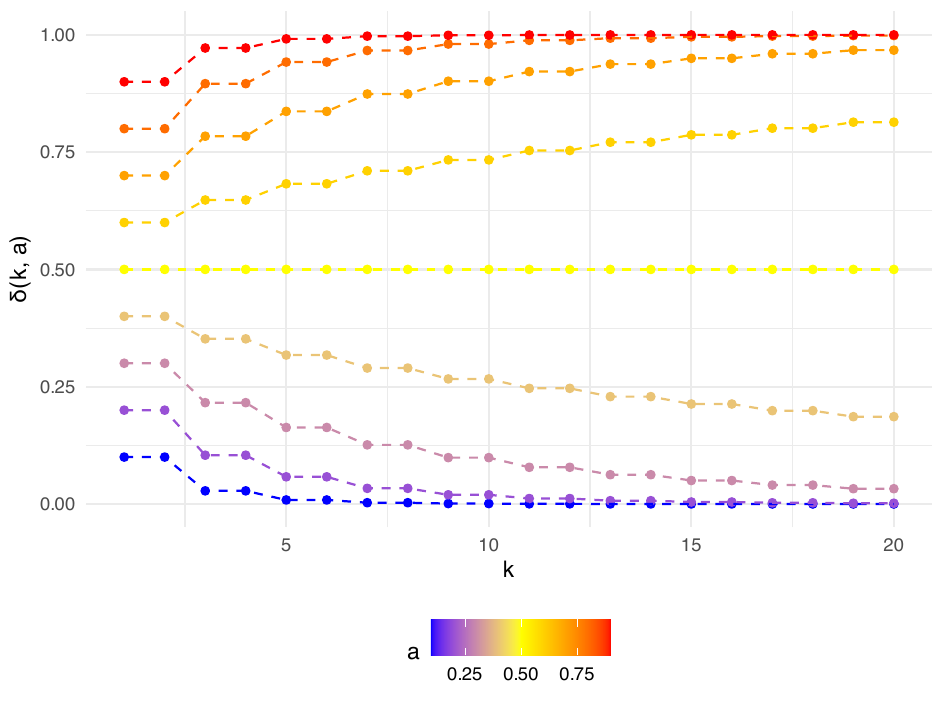}
    \caption{The values of $\delta(k,a)$ for $a=0.1, 0.2, \ldots, 0.9$, as defined in Equation~\eqref{eq:uv}.}\label{fig:delta}
\end{figure}

We need to introduce some coefficients related to the Binomial distribution to understand the probabilistic properties of the median-based scoring. For any integer $k$ and any real $a\in(0,1)$, we set
\begin{align}
    u(k, a) &= \p\Big(\mathscr{B}\text{in}(k,a)>k/2\Big),
    \quad
    v(k, a) = \p\Big(\mathscr{B}\text{in}(k,a)=k/2\Big) \text{ and}\notag
    \\
    \delta(k, a) & = u(k, a) + v(k, a)/2. \label{eq:uv}
\end{align}
Note that, If $k$ is an odd number, then $v(k, a)=0$. Values of $\delta(k, a)$ are given in Figure~\ref{fig:delta}.
\begin{prop}\label{pro:uv} The $\delta$-coefficients have the following properties.
    \begin{enumerate}[\em (i)]
        \item $\delta(1, a) = a$.\label{p1}
        \item $1-\delta(k, a) = \delta(k, 1-a)$.\label{psym}
        \item If $k$ is an even number, then $\delta(k, a) = \delta(k-1, a)$.\label{pconst}
        \item If $a<1/2$, then $\delta(k,a)$ is a non-increasing function of $k$.\label{pnoninc}
        \item If $a<1/2$, then $\delta(k,a)$ tends to $0$ when $k\to\infty$.\label{pzero}
    \end{enumerate}
\end{prop}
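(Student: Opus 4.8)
\textbf{Proof proposal for Proposition~\ref{pro:uv}.}

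The plan is to establish the five properties essentially in the order stated, since the later ones lean on the earlier ones. Property~\eqref{p1} is immediate: when $k=1$, $\mathscr{B}\text{in}(1,a)$ takes the value $1$ with probability $a$ and $0$ otherwise, so $u(1,a)=a$, $v(1,a)=0$, hence $\delta(1,a)=a$. For the symmetry~\eqref{psym} I would use the distributional identity $k-\mathscr{B}\text{in}(k,a)\sim\mathscr{B}\text{in}(k,1-a)$: the event $\{\mathscr{B}\text{in}(k,1-a)>k/2\}$ corresponds to $\{\mathscr{B}\text{in}(k,a)<k/2\}$ and the tie event is preserved, so $u(k,1-a)=\p(\mathscr{B}\text{in}(k,a)<k/2)=1-u(k,a)-v(k,a)$ while $v(k,1-a)=v(k,a)$; adding $v(k,1-a)/2$ gives $\delta(k,1-a)=1-u(k,a)-v(k,a)/2=1-\delta(k,a)$.

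For property~\eqref{pconst}, take $k$ even and couple a $\mathscr{B}\text{in}(k,a)$ with a $\mathscr{B}\text{in}(k-1,a)$ by writing the former as the latter plus an independent $\mathscr{B}\text{er}(a)$ summand $Z$. Conditioning on $Z$, one expresses $\delta(k,a)=u(k,a)+v(k,a)/2$ in terms of point and tail probabilities of $\mathscr{B}\text{in}(k-1,a)$; since $k-1$ is odd, the middle value $k/2$ is not attainable by $\mathscr{B}\text{in}(k-1,a)$, and a short case analysis on whether $\mathscr{B}\text{in}(k-1,a)$ equals $(k-2)/2$, $k/2$, or sits strictly on one side collapses the expression exactly to $\delta(k-1,a)=\p(\mathscr{B}\text{in}(k-1,a)>(k-1)/2)$. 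The cleanest route is probably to write $2\delta(k,a)=\p(\mathscr{B}\text{in}(k,a)>k/2)+\p(\mathscr{B}\text{in}(k,a)\ge k/2)$, i.e. the average of the ``$>$'' and ``$\ge$'' tail probabilities, and check this average is insensitive to adding the extra Bernoulli trial when the starting count is odd.

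The monotonicity~\eqref{pnoninc} is the main obstacle. By~\eqref{pconst} it suffices to compare $\delta(k,a)$ and $\delta(k+2,a)$ for odd $k$, i.e. to show $\p(\mathscr{B}\text{in}(k,a)>k/2)\ge\p(\mathscr{B}\text{in}(k+2,a)>(k+2)/2)$ when $a<1/2$. Adding two independent $\mathscr{B}\text{er}(a)$ trials $Z_1,Z_2$ to a $\mathscr{B}\text{in}(k,a)$ variable $B$, the target becomes $\p(B\ge (k+1)/2)\ge\p(B+Z_1+Z_2\ge (k+3)/2)$; splitting on $(Z_1,Z_2)\in\{(0,0),(0,1)\text{ or }(1,0),(1,1)\}$ reduces this to the single inequality $\p(B=(k-1)/2)\ge\p(B=(k+1)/2)$ (after the $a^2$, $2a(1-a)$, $(1-a)^2$ bookkeeping cancels telescoping terms). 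That remaining point-mass inequality is exactly the statement that consecutive binomial probabilities are decreasing past the mode, which holds here because the ratio $\p(B=(k+1)/2)/\p(B=(k-1)/2)=\frac{(k+1)/2}{(k+1)/2}\cdot\frac{a}{1-a}$—more precisely $\frac{k-(k-1)/2}{(k-1)/2+1}\cdot\frac{a}{1-a}=\frac{a}{1-a}<1$ since $a<1/2$ and $(k-1)/2\ge (k-1)/2$; one checks the combinatorial factor is $\le 1$ for this central index. I would present this as a clean lemma on the binomial pmf and then assemble the pieces. Finally, property~\eqref{pzero} follows from~\eqref{pnoninc} together with a crude tail bound: $\delta(k,a)\le u(k,a)+v(k,a)\le 2\,\p(\mathscr{B}\text{in}(k,a)\ge k/2)$, which tends to $0$ as $k\to\infty$ by the law of large numbers (or Hoeffding) since $a<1/2$; monotonicity then guarantees the limit is genuinely $0$ rather than merely a subsequential one.
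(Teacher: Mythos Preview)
Your argument is essentially correct and for items \eqref{p1}, \eqref{psym}, \eqref{pconst}, \eqref{pzero} it matches the paper's proof almost verbatim (same coupling $Z_k=B_1+\cdots+B_k$, same reflection $k-Z_k\sim\mathscr{B}\text{in}(k,1-a)$, same Hoeffding bound; your unified treatment of \eqref{psym} via $v(k,1-a)=v(k,a)$ is in fact slightly slicker than the paper's odd/even split).

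The only substantive difference is in \eqref{pnoninc}. You compare $\delta(k,a)$ with $\delta(k+2,a)$ for odd $k$ by adding two Bernoulli trials; the paper instead compares $\delta(2m,a)$ with $\delta(2m+1,a)$ by adding a single trial. The paper's route is noticeably cleaner: conditioning $Z_{2m+1}=Z_{2m}+B_{2m+1}$ on $B_{2m+1}$ gives directly $\delta(2m+1,a)=u(2m,a)+a\,v(2m,a)$, so $\delta(2m,a)-\delta(2m+1,a)=(\tfrac12-a)\,v(2m,a)>0$, with no pmf ratios needed. In your two-trial approach the telescoping does not quite reduce to $\p(B=(k-1)/2)\ge\p(B=(k+1)/2)$ as you state: writing $m=(k+1)/2$, the residual after cancellation is $(1-a)^2\,\p(B=m)-a^2\,\p(B=m-1)$, not a bare difference of point masses. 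This is harmless, since your own ratio computation $\p(B=m)/\p(B=m-1)=a/(1-a)$ turns that residual into $a(1-2a)\,\p(B=m-1)\ge 0$, so the proof goes through; but the intermediate sentence should be corrected, and you may find the paper's one-step comparison a tidier way to package the same idea.
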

\begin{proof}
    We introduce $B_i$, $i=1,2,\ldots$, an iid sequence of random variables distributed according to $\mathscr{B}\text{er}(a)$, and set $Z_k=B_1+\cdots+B_k\sim \mathscr{B}\text{in}(k,a)$ for all $k\ge1$.

    \paragraph{Proof of \eqref{p1}} We have $\delta(1, a) = \p(Z_1>1/2) + \p(Z_1=1/2)/2 = \p(Z_1=1) = a$.

    \paragraph{Proof of \eqref{psym}} Consider first the case where $k$ is odd. We have $1-u(k, a) = 1 - \p(Z_k > k/2) = \p(Z_k<k/2)$ since $k/2$ is not an integer. And $\p(Z_k<k/2)= \p(k-Z_k>k/2)$. Now $k-Z_k = (1-B_1) + \cdots + (1-B_k)\sim \mathscr{B}\text{in}(k, 1-a)$. Thus, $1-u(k, a) = u(k, 1-a)$. Moreover, when $k$ is odd, $v(k, a)=0$ and $\delta(a, k) = u(a,k)$. Hence, \eqref{psym} is proven when $k$ is odd.

    Consider now the case where $k$ is even. We have $1-u(k, a) = 1 - \p(Z_k > k/2) = \p(Z_k\le k/2)= \p(Z_k < k/2) + \p(Z_k=k/2)$. Like above, $\p(Z_k<k/2)=u(k, 1-a)$. And $v(k, a)=\p(Z_k=k/2)=\p(k-Z_k=k/2)=v(k, 1-a)$. So $1-u(k, a)=\big[u(k, 1-a)+v(k, 1-a)\big]$. Thus,
    \[
        1-\delta(k,a) = 1 - u(k,a) - \frac12 v(k,a) = \big[u(k, 1-a) + v(k, 1-a)\big] - \frac12 v(k, 1-a) = \delta(k, 1-a).
    \]

    \paragraph{Proof of \eqref{pconst}} Set $k=2m$, where $m$ is an integer.
    We have $\delta(2m-1,a) = u(2m-1,a) = \p(Z_{2m-1}>m-1/2) = \p(Z_{2m-1}\ge m)$. And $Z_{2m-1} = Z_{2m} - B_{2m}$. Since $B_{2m}$ is either $0$ or $1$, we have
    \[
        \Big\{Z_{2m-1}\ge m\Big\} = \Big\{Z_{2m}\ge m+1\Big\} \cup \Big\{Z_{2m}=m, B_{2m}=0\Big\},
    \]
    and the two events are disjoint. Thus $u(2m-1,a) = u(2m,a) + \p(Z_{2m}=m, B_{2m}=0)$. To compute the last probability, recall that $Z_{2m}=B_1+\cdots+B_{2m}$, and so, $Z_{2m}=m$ means that exactly half of the $B_i$'s are equal to $0$. Thus, given $Z_{2m}=m$, the probability that $B_{2m}=0$ is $m/(2m)=1/2$. Hence $\p(Z_{2m}=m, B_{2m}=0) = \p(Z_{2m}=m)/2=v(2m,a)/2$ and finally, $u(2m-1,a) = u(2m,a) + v(2m,a)/2$.

    \paragraph{Proof of \eqref{pnoninc}} Set $a<1/2$. It is enough to prove that, for all $m$, $\delta(2m-1, a)=\delta(2m, a)>\delta(2m+1, a)$. The equality is the result of \eqref{pconst}. Thus, it is enough to prove that $\delta(2m, a)>\delta(2m+1, a)$ for all $m$.

    Fix any $m$. We have $\delta(2m+1, a) = \p(Z_{2m+1}\ge m+1)$. And $Z_{2m+1}=Z_{2m}+B_{2m+1}$ with independence between these last two variables. Thus,
    \[
        \delta(2m+1, a) = (1-a) \p(Z_{2m}\ge m+1) + a \p(Z_{2m}\ge m).
    \]
    Using $\p(Z_{2m}\ge m) = \p(Z_{2m}\ge m+1) + \p(Z_{2m}=m)$ in the above, we get
    \begin{equation*}
        \delta(2m+1, a) = \p(Z_{2m}\ge m+1) + a \p(Z_{2m}=m)=u(2m, a) + a\, v(2m,a).
    \end{equation*}
    Since $a<1/2$, we have $a\, v(2m,a)<v(2m,a)/2$ and thus $\delta(2m+1, a) < \delta(2m, a)$.

    \paragraph{Proof of \eqref{pzero}} Since $a<1/2$, Hoefding's inequality gives
    \[
        \p(Z_k\ge k/2) \le \exp\left(-2k(1/2-a)^2\right).
    \]
    Thus, $\delta(k,a) =\p(Z_k>k/2)+\p(Z_k=k/2)/2\le \p(Z_k\ge k/2)$ tends to $0$ when $k\to\infty$.
\end{proof}

\section{Theoretical results on the classifiers}\label{sec:classifiers}

\subsection{Proof of Theorem~\ref{thme:accuracy_AM_classifiers}}\label{sec:proof_accuracy_AM_classifiers}

The sensitivity of a classifier $\widehat{T}$ is $\p(\widehat{T}=1|T=1)$. The average-based classifier is one if and only if $S_i > n_i v_U$. And the median-based classifier is one if and only if $S_i > n_i/2$. Because of (H3), $\{S_i > n_i v_U \} \subset \{S_i > n_i /2 \}$, and thus the median-based classifier is more sensitive than the average-based classifier.

There is equality if and only if both events are equal. This happens if and only if $v_U=1/2$. In the case where all $n_i$ are equal, this happens if and only if $v_U < 1/2 + \delta_0$, where $\delta_0$ is defined in Section~\ref{sec:accuracy_classifiers}. The same kind of reasoning yields to the inequality on the specificities of the classifiers.\qed

\subsection{Proof of Theorem~\ref{thme:accuracy_bayesian_classifiers}}\label{sec:proof_accuracy_Bayesian_classifiers}

\paragraph{Proof of $(i)$} The likelihood-based scoring is $\p(T_i=1)$. Because of (H4) and Lemma~\ref{lem:loss}, the thresholding of this scoring with $v_L$ and $v_U$ gives the best estimator regarding $\ell$-risk. Hence, the desired inequality.\qed%

\paragraph{Proof of $(ii)$} We can also apply Lemma~\ref{lem:loss}, using the expected value $\e_\pi$ given the data $S=(S_1,\ldots, S_n)$, with $\vartheta = \p_\pi(T_i=1|S)$
which is the definition of $Y_{B,i}$. Because of (H4), Lemma~\ref{lem:loss} gives that $\widehat T_{B,i}$ is the best estimator in terms of $\ell$-posterior risk, i.e.,
\[
    \e_\pi\Big(\ell(T_i,\widehat T_{B,i})\Big|S\Big) \le \e_\pi\Big(\ell(T_i,\widehat T_i)\Big|S\Big).
\]
Integrating over the marginal distribution of $S$ in the Bayesian model yields the desired inequality.\qed%

\paragraph{Proof of $(iii)$} If $(iii)$ is wrong, then the inequality has a prior probability of $1$. I.e., with prior probability equal to $1$,
\[
    \e_\pi\Big[\ell\big(T_i,\widehat T_{i}\big)\Big|\theta_T,p,q\Big] <
    \e_\pi\Big[\ell\big(T_i,\widehat T_{B,i}\big)\Big|\theta_T,p,q\Big].
\]
If we integrate this over the prior distribution, we get a classifier $\widehat T_i$ strictly better than the Bayesian classifier $\widehat T_{B,i}$, which is impossible because of $(ii)$. Hence, $(iii)$ is true.\qed%

\subsection{Average and median-based empirical risks}
\label{sec:App_ellrisk}

Let us denote by $H$ the set of the healthy individuals and by $I$ the set of the infected ones. Let us consider a decision cost $a \in ]0, \frac{1}{2}]$. From the definition of the loss function given in subsection~\ref{sec:loss}, we have
\begin{eqnarray*}
\ell_a^M & = & a \sum_{i=1}^{\nbInd} \ind{ \frac{S_i}{n_i} = \frac{1}{2} } + \sum_{i=1}^{\nbInd} \ind{ \frac{S_i}{n_i} > \frac{1}{2} }  \ind{i \in H} + \sum_{i=1}^{\nbInd} \ind{ \frac{S_i}{n_i} < \frac{1}{2} }  \ind{i \in I}.
\end{eqnarray*}
Then $\ell_a^M$ is a linear and increasing function of the decision cost $a$. In a similar way,
\begin{eqnarray*}
\ell_a^A & = & a \sum_{i=1}^{\nbInd} \ind{ \frac{S_i}{n_i} \in [a, 1-a] } + \sum_{i=1}^{\nbInd} \ind{ \frac{S_i}{n_i} \in ]1-a, 1] }  \ind{i \in H} + \sum_{i=1}^{\nbInd} \ind{ \frac{S_i}{n_i} \in [0, a[ }  \ind{i \in I}.
\end{eqnarray*}
It is obvious that as long as $a$ (or $1-a$) does not cross an observable value of $Si/ni$, the terms involved in the sums are constant, and so the average-risk $\ell_a^A$ is linear and increasing with $a$. Then jumps occur
only when $a$ crosses an observed value of $\frac{s_i}{n_i}$. In this case, the average risk increases from $1-a$ times the number of infected patients with $\frac{s_i}{n_i} = a$ but decreases from $a$ times the number of healthy patients with $\frac{s_i}{n_i} = a$. At the same time, the average risk increases from $1-a$ times the number of healthy patients with $\frac{s_i}{n_i} = 1-a$ but decreases from $a$ times the number of infected patients with $\frac{s_i}{n_i} = 1-a$. Note that infected patients are mainly expected to have $\frac{s_i}{n_i} > \frac{1}{2}$ whereas healthy patients are mainly expected to have $\frac{s_i}{n_i} < \frac{1}{2}$.

\section{Theoretical results on the scores}\label{sec:scoring}

\subsection{Bias of the scores}\label{sec:bias_scores}

For an individual $i$, the two frequentist scores $Y_{A, i}$ and $Y_{M, i}$ are biased when we consider recovering the latent $T_i$ variable or the prevalence $\theta_T$. Their expected and conditional expected values given $T_i$ are explicit; see Proposition~\ref{pro:bias} below.
\begin{prop}[Bias]\label{pro:bias}
    Fix an individual $i$ for which we observe $n_i$ replicates. The average-based scoring $Y_{A,i}$ is such that
    \begin{align*}
        \e\big(Y_{A,i}\big|T_i\big) & = T_i (1-q) + (1-T_i)p,\\
        \e\big(Y_{A,i}\big) & = \theta_T (1-q) + (1-\theta_T)p.
    \end{align*}
    The median-based scoring $Y_{M,i}$ is such that
    \begin{align*}
        \e\big(Y_{M,i}\big|T_i\big) & = T_i \delta(n_i,1-q)+ (1-T_i) \delta(n_i,p),
        \\
        \e\big(Y_{M,i}\big) & = \theta_T \delta(n_i,1-q)
        + (1-\theta_T) \delta(n_i,p),
    \end{align*}
    with the coefficients defined in Equation~\eqref{eq:uv}.
\end{prop}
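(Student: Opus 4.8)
The plan is to handle the two scores separately, in each case first computing the expectation conditional on $T_i$ --- which fixes the success probability of the Bernoulli replicates via~\eqref{eq:SgivenT} --- and then obtaining the unconditional expectation by the tower property together with $T_i\sim\mathscr{B}\text{er}(\theta_T)$, i.e. $\e(T_i)=\theta_T$.

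For the average-based scoring, I would simply use $Y_{A,i}=S_i/n_i$ and linearity of expectation: since $\big[S_i\big|T_i\big]\sim\mathscr{B}\text{in}\big(n_i,\,T_i(1-q)+(1-T_i)p\big)$ by~\eqref{eq:SgivenT}, we get $\e(Y_{A,i}\mid T_i)=\tfrac1{n_i}\e(S_i\mid T_i)=T_i(1-q)+(1-T_i)p$, and averaging over $T_i$ gives $\e(Y_{A,i})=\theta_T(1-q)+(1-\theta_T)p$. For the median-based scoring, I would start from the expression $Y_{M,i}=\ind{S_i>n_i/2}+\tfrac12\ind{S_i=n_i/2}$ of~\eqref{eq:Y_M1}, so that $\e(Y_{M,i}\mid T_i)=\p(S_i>n_i/2\mid T_i)+\tfrac12\p(S_i=n_i/2\mid T_i)$. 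On $\{T_i=1\}$ the conditional law of $S_i$ is $\mathscr{B}\text{in}(n_i,1-q)$, so this quantity equals $u(n_i,1-q)+\tfrac12 v(n_i,1-q)=\delta(n_i,1-q)$ by definition~\eqref{eq:uv}; on $\{T_i=0\}$ it equals $\delta(n_i,p)$ by the same argument. Recombining the two cases through the $\{0,1\}$-valued indicator $T_i$ yields $\e(Y_{M,i}\mid T_i)=T_i\,\delta(n_i,1-q)+(1-T_i)\,\delta(n_i,p)$, and a last application of the tower property gives the stated unconditional formula.

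There is essentially no genuine obstacle here: the proposition is a direct bookkeeping consequence of the conditional distribution~\eqref{eq:SgivenT}, the definitions of $Y_{A,i}$ and $Y_{M,i}$, and the $\delta$-coefficients of~\eqref{eq:uv}. The only point deserving a line of care is identifying $\p(S_i>n_i/2\mid T_i)+\tfrac12\p(S_i=n_i/2\mid T_i)$ with $\delta(n_i,\cdot)$, which is immediate from~\eqref{eq:uv} (and consistent with $v(n_i,\cdot)=0$, hence $Y_{M,i}\in\{0,1\}$, when $n_i$ is odd).
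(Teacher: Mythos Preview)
Your proof is correct and follows essentially the same approach as the paper: compute $\e(Y_{K,i}\mid T_i)$ by integrating the explicit expressions $Y_{A,i}=S_i/n_i$ and $Y_{M,i}=\ind{S_i>n_i/2}+\tfrac12\ind{S_i=n_i/2}$ against the conditional binomial law~\eqref{eq:SgivenT}, then pass to the unconditional expectation by linearity in $T_i$ and $\e(T_i)=\theta_T$. Your write-up is in fact more explicit than the paper's, which simply points to~\eqref{eq:SgivenT} and~\eqref{eq:Y_M1} and notes that replacing $T_i$ by $\theta_T$ in the conditional formula gives the unconditional one.
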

\begin{proof} For all $K\in\{A, M\}$, $\e(T_{K,i}|T_i)$ is a linear function of $T_i$. Thus, computing $\e(T_{K,i})$ from $\e(T_{K,i}|T_i)$ is straightforward: we replace $T_i$ in the expression by its expected value $\theta_T$.

The two scores are explicit functions of $S_i$, given by $Y_{A, i} = S_i/n_i$  and Equation~\eqref{eq:Y_M1} above. Integrating those formulae over the Binomial distribution of $[S_i|T_i]$ in~\eqref{eq:SgivenT} yields the desired results.
\end{proof}

\subsection{Variance of the scores}\label{sec:variance_scores}
The variance measures the random variability of each score around its expected value. We can compute them explicitly for the average- and median-based scores as follows. To state the results, we need to introduce another coefficient as in Equation~\eqref{eq:uv}:
\begin{equation}\label{eq:gamma}
    \gamma(k, a) = u(k, a)(1-u(k, a)) + \frac14 v(k, a)(1-v(k, a))
    -  u(k, a)v(k, a).
\end{equation}
\begin{prop}[Variance]\label{pro:var}
    Fix an individual $i$ for which we observe $n_i$ replicates. The average-based scoring $Y_{A,i}$ is such that
    \begin{align*}
        \var\big(Y_{A,i}\big|T_i\big) & = T_i \frac{q(1-q)}{n_i} + (1-T_i)\frac{p(1-p)}{n_i},\\
        \var\big(Y_{A,i}\big) & = \theta_T(1-\theta_T)(1-q-p)^2+\theta_T\frac{q(1-q)}{n_i} + (1-\theta_T)\frac{p(1-p)}{n_i}.
    \end{align*}
    The median-based scoring $Y_{M,i}$ is such that
    \begin{align*}
        \var\big(Y_{M,i}\big|T_i\big) & = T_i\gamma(n_i, 1-q)+ (1-T_i)\gamma(n_i, p),
        \\
        \var\big(Y_{M,i}\big) & = \theta_T(1-\theta_T)\big(1-\delta(n_i, q)-\delta(n_i, p)\big)^2 +\theta_T\gamma(n_i, 1-q) + (1-\theta_T) \gamma(n_i, p).
    \end{align*}
\end{prop}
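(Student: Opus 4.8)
The plan is to obtain each of the four variances by conditioning on $T_i$ and then applying the law of total variance, reusing the conditional means already computed in Proposition~\ref{pro:bias}.

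\textbf{Average-based score.} Since $Y_{A,i}=S_i/n_i$ and, by~\eqref{eq:SgivenT}, $[S_i|T_i]\sim\mathscr{B}\text{in}(n_i,r)$ with $r=T_i(1-q)+(1-T_i)p$, I would first note $\var(Y_{A,i}|T_i)=n_i^{-2}\var(S_i|T_i)=r(1-r)/n_i$; evaluating at $T_i=0$ (where $r=p$) and $T_i=1$ (where $r=1-q$) gives the first formula. For the unconditional variance I would write $\var(Y_{A,i})=\e\big[\var(Y_{A,i}|T_i)\big]+\var\big[\e(Y_{A,i}|T_i)\big]$. The inner conditional variance is an affine function of the Bernoulli variable $T_i$, so its expectation is obtained by replacing $T_i$ with $\theta_T$, exactly as in the proof of Proposition~\ref{pro:bias}. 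For the second term, $\e(Y_{A,i}|T_i)=p+T_i(1-q-p)$ by Proposition~\ref{pro:bias}, whence $\var\big[\e(Y_{A,i}|T_i)\big]=(1-q-p)^2\var(T_i)=(1-q-p)^2\theta_T(1-\theta_T)$; summing yields the stated expression.

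\textbf{Median-based score.} Here the key device is to write $Y_{M,i}=U+\tfrac12 V$ with $U=\ind{S_i>n_i/2}$ and $V=\ind{S_i=n_i/2}$, two indicators of disjoint events, so that $UV=0$, $U^2=U$, $V^2=V$. Conditionally on $T_i$ (hence on the Binomial parameter $r$), $\e(U|T_i)=u(n_i,r)$ and $\e(V|T_i)=v(n_i,r)$, so $\e(Y_{M,i}|T_i)=\delta(n_i,r)$ and $\e(Y_{M,i}^2|T_i)=\e(U+\tfrac14 V|T_i)=u(n_i,r)+\tfrac14 v(n_i,r)$. Subtracting $\delta(n_i,r)^2$ and expanding the square produces $u(1-u)+\tfrac14 v(1-v)-uv$, which is precisely $\gamma(n_i,r)$ from~\eqref{eq:gamma}; substituting $r=1-q$ and $r=p$ gives the conditional-variance formula. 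The unconditional variance then follows from the law of total variance once more: $\e\big[\var(Y_{M,i}|T_i)\big]=\theta_T\gamma(n_i,1-q)+(1-\theta_T)\gamma(n_i,p)$ by affineness in $T_i$, and $\e(Y_{M,i}|T_i)=\delta(n_i,p)+T_i\big(\delta(n_i,1-q)-\delta(n_i,p)\big)$ from Proposition~\ref{pro:bias} gives $\var\big[\e(Y_{M,i}|T_i)\big]=\big(\delta(n_i,1-q)-\delta(n_i,p)\big)^2\theta_T(1-\theta_T)$. The final cosmetic step is to invoke the symmetry identity $1-\delta(k,a)=\delta(k,1-a)$ of Proposition~\ref{pro:uv}, i.e.\ $\delta(n_i,1-q)=1-\delta(n_i,q)$, rewriting the coefficient as $\big(1-\delta(n_i,q)-\delta(n_i,p)\big)^2$.

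I do not anticipate a genuine obstacle: everything reduces to first and second moments of the Binomial together with one symmetry identity. The one spot that needs care is the median case, where one must remember that $Y_{M,i}$ can take the value $1/2$; the $U+\tfrac12 V$ decomposition with $UV=0$ keeps the bookkeeping clean and is what generates the cross term $-u(k,a)v(k,a)$ in the definition of $\gamma$. I would also state explicitly why ``replacing $T_i$ by $\theta_T$'' is legitimate for the $\e[\var(\cdot|T_i)]$ terms --- namely that those conditional variances are affine functions of the Bernoulli variable $T_i$ --- since this is the only non-mechanical point in the argument.
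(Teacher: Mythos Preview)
Your proposal is correct and follows essentially the same approach as the paper: compute the conditional variance from the Binomial law of $[S_i\mid T_i]$ (via the indicator decomposition~\eqref{eq:Y_M1} in the median case) and then apply the law of total variance. Your write-up is simply more explicit than the paper's, which compresses the median case into a ``likewise''; in particular, you spell out the $U+\tfrac12V$ bookkeeping that yields $\gamma(n_i,r)$ and the use of the symmetry $\delta(n_i,1-q)=1-\delta(n_i,q)$ from Proposition~\ref{pro:uv} to match the stated form.
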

\begin{proof}
    For the average-based score $Y_{A,i}$, we used $Y_{A,i}=S_i/n_i$ and the Binomial distribution of $[S_i|T_i]$ in~\eqref{eq:SgivenT} to compute the conditional variance given $T_i$. The unconditional variance is obtained by using
    \[
        \var(Y_{A,i}) = \e(\var(Y_{A,i}|T_i)) + \var(\e(Y_{A,i}|T_i)).
    \]

    Likewise, for the variance of the median-based scoring $Y_{M,i}$, using Equation~\eqref{eq:Y_M1}.
\end{proof}

\section{Theoretical results on the prevalence estimates}\label{sec:prevalence}

\subsection{Prevalence estimates bias}\label{sec:prevalence_bias}
There is a systematic bias in the average-based and median-based prevalence estimates, which are empirical means of their respective scores. Indeed, using Proposition~\ref{pro:bias}, we can compute their bias. We need the following coefficients to state the results: for all $a\in(0,1)$, we set
\[
    \ol\delta(a) = \frac{1}{n}\sum_{i=1}^n\delta(n_i, a),
\]
with the coefficients introduced in Equation~\eqref{eq:uv}.
\begin{prop}\label{prop:prevalence_bias}
    The average-based prevalence estimate $\widehat{\theta}_{T,A}$ is such that
    \[
        \e(\widehat{\theta}_{T,A}) = \theta_T (1-q) + (1-\theta_T)p.
    \]
    The median-based prevalence estimate $\widehat{\theta}_{T,M}$ is such that
    \[
        \e(\widehat{\theta}_{T,M}) = \theta_T (1-\ol\delta(q)) + (1-\theta_T)\ol\delta(p),
    \]
    Moreover, if {\em(H\ref{h3replicates})} is true, then
    \[
        \ol\delta(p) < p \quad \text{and} \quad \ol\delta(q) < q.
    \]
\end{prop}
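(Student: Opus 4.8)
The first two identities are immediate from what precedes. By definition $\widehat\theta_{T,K}=\frac1n\sum_{i=1}^n Y_{K,i}$ for $K\in\{A,M\}$, so by linearity of expectation and Proposition~\ref{pro:bias},
\[
\e(\widehat\theta_{T,A})=\frac1n\sum_{i=1}^n\big(\theta_T(1-q)+(1-\theta_T)p\big)=\theta_T(1-q)+(1-\theta_T)p,
\]
and
\[
\e(\widehat\theta_{T,M})=\frac1n\sum_{i=1}^n\big(\theta_T\,\delta(n_i,1-q)+(1-\theta_T)\,\delta(n_i,p)\big).
\]
Applying the symmetry $\delta(k,1-a)=1-\delta(k,a)$ from Proposition~\ref{pro:uv}\eqref{psym} with $a=q$ rewrites $\delta(n_i,1-q)=1-\delta(n_i,q)$, and collecting the $\frac1n$-sums into $\ol\delta(p)$ and $\ol\delta(q)$ gives exactly the stated expression $\theta_T(1-\ol\delta(q))+(1-\theta_T)\ol\delta(p)$.

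For the two strict inequalities, fix $a\in(0,1/2)$ (applied later to $a=p$ and $a=q$). By Proposition~\ref{pro:uv}\eqref{p1} we have $\delta(1,a)=a$, and by \eqref{pnoninc} the map $k\mapsto\delta(k,a)$ is non-increasing; hence $\delta(n_i,a)\le a$ for every $i$ (recall $n_i\ge1$). To upgrade this to a strict inequality when $n_i\ge3$, I would note that the monotonicity is strict at the step from $k=2$ to $k=3$: using \eqref{pconst} and the intermediate formula $\delta(2m+1,a)=u(2m,a)+a\,v(2m,a)$ obtained in the proof of \eqref{pnoninc} (with $m=1$), one gets
\[
\delta(3,a)-\delta(2,a)=\big(a-\tfrac12\big)\,v(2,a)<0
\]
since $a<1/2$ and $v(2,a)=\p(\mathscr{B}\text{in}(2,a)=1)>0$; combined with the non-increasingness this yields $\delta(n_i,a)<a$ for all $n_i\ge3$.

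Finally, under (H\ref{h3replicates}) there is at least one index $i_0$ with $n_{i_0}\ge3$, so in the average $\ol\delta(a)=\frac1n\sum_{i=1}^n\delta(n_i,a)$ every term is $\le a$ and the $i_0$-term is $<a$, whence $\ol\delta(a)<a$. Taking $a=p$ and $a=q$ gives $\ol\delta(p)<p$ and $\ol\delta(q)<q$. I do not anticipate a genuine obstacle here; the only point requiring slight care is that a bare ``non-increasing'' is not enough for the strict conclusion, which is why I would extract the strict step $\delta(3,a)<\delta(2,a)$ explicitly from the computation already carried out in the proof of Proposition~\ref{pro:uv}\eqref{pnoninc}.
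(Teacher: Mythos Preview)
Your proof is correct and follows the natural route the paper implicitly relies on: the paper states Proposition~\ref{prop:prevalence_bias} without proof, clearly intending the expectation formulas to fall out of Proposition~\ref{pro:bias} by linearity and the strict inequalities to follow from Proposition~\ref{pro:uv}. Your only addition is the explicit verification that $\delta(3,a)<\delta(2,a)=a$, which is indeed needed for strictness and is already contained in the paper's proof of \eqref{pnoninc} (where the statement $\delta(2m-1,a)=\delta(2m,a)>\delta(2m+1,a)$ is established for every $m$, so in particular $m=1$); you are right to flag that mere non-increasingness would not suffice.
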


\subsection{Proof of Theorem~\ref{thme:bias_AM}}\label{sec:proof_prevalence}

\begin{figure}
    \centering
    \includegraphics*[width=.7\textwidth]{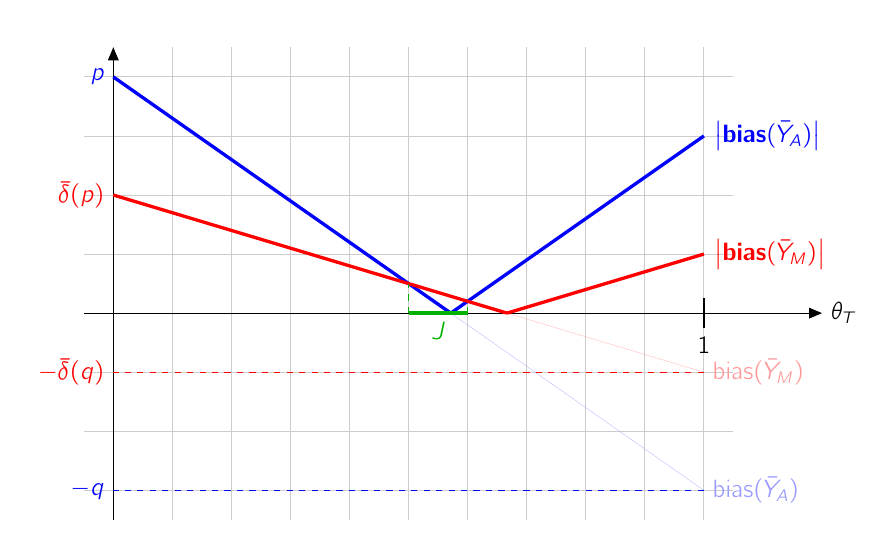}
    \caption{Typical behavior of the Biases of the average- and median-based prevalence estimates $\widehat{\theta}_{T,A}$ and $\widehat{\theta}_{T,M}$ as a function of $\theta_T$, and their absolute values. In absolute values, the bias of the average-based estimate in thick blue is smaller than the bias of the median-based one if and only if $\theta_T\in J$.}\label{fig:bias}
\end{figure}

Fix $p$ and $q$ in $(0,1/2)$ and assume (H2). The bias of the prevalence estimates are
the linear functions of $\theta_T$ because of Proposition~\ref{prop:prevalence_bias}:
\begin{align*}
    \text{bias}(\widehat{\theta}_{T,A}) &= \e(\widehat{\theta}_{T,A}) - \theta_T  = p - \theta_T(p+q),\\
    \text{bias}(\widehat{\theta}_{T,M}) &= \e(\widehat{\theta}_{T,M}) - \theta_T  = \ol\delta(p) - \theta_T (\ol\delta(p)+\ol\delta(q)).
\end{align*}
The first one is null at $\theta_T=p/(p+q)$, and the second one is null at $\theta_T=\ol\delta(p)/(\ol\delta(p)+\ol\delta(q))$. The absolute values of these biases are thus as in Figure~\ref{fig:bias}. And, because of (H2), $\ol\delta(p)<p$ and $\ol\delta(q)<q$. Thus,
\[
    \Big|\text{bias}(\widehat{\theta}_{T,M})\Big| < \Big|\text{bias}(\widehat{\theta}_{T,A})\Big|
\]
except when $\theta_T$ is in the interval, we denote $J$, where the blue curve is below the red one in Figure~\ref{fig:bias}. And, since the blue curve is null at $p/(p+q)$, this value is in $J$.

Because of Proposition~\ref{pro:uv}, $\delta(k,p)$ is a non-increasing function of $k$, and thus $\ol\delta(p)\le\delta(n_0, p)$. And likewise, $\ol\delta(q)\le\delta(n_0, q)$. Moreover, the two bounds $\delta(n_0,p)$ and $\delta(n_0,q)$ tends to $0$ when $n_0\to\infty$. That is to say that the red curve in Figure~\ref{fig:bias} tends to zero when $n_0\to\infty$, whereas the blue curve does not change. Hence, the length of $J$ tends to $0$ when $n_0\to\infty$.\qed%

\subsection{Variance of the prevalence estimates}\label{sec:prevalence_variance}

The variance of the average- or median-based estimate is the variance of the empirical means of the scores. As for the biases, we can compute them explicitly. We need the following coefficients to state the results: for all $a\in(0,1)$, we set
\[
    \ol\gamma(a) = \frac{1}{{\nbInd}}\sum_{i=1}^{\nbInd}\gamma(n_i, a)\quad \text{and} \quad
    \tilde n = \left(\frac{1}{{\nbInd}}\sum_{i=1}^{\nbInd} \frac{1}{n_i}\right)^{-1}.
\]
Note that $\tilde n$ is the harmonic mean of the $n_i$'s.
\begin{prop}\label{prop:prevalence_variance}
    The average-based prevalence estimate $\widehat{\theta}_{T,A}$ is such that
    \[
        \var(\widehat{\theta}_{T,A}) = \frac1{\nbInd} \Bigg\{
            \theta_T(1-\theta_T)(1-q+p) + \theta_T\frac{q(1-q)}{\tilde n} + (1-\theta_T)\frac{p(1-p)}{\tilde n}
        \Bigg\}.
    \]
    The median-based prevalence estimate $\widehat{\theta}_{T,M}$ is such that
    \[
        \var(\widehat{\theta}_{T,M}) = \frac1{\nbInd} \Bigg\{
            \theta_T(1-\theta_T)(1-\ol\delta(q)+\ol\delta(p)) + \theta_T\ol\gamma(1-q) + (1-\theta_T)\ol\gamma(p)
        \Bigg\}.
    \]
\end{prop}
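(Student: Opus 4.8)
The plan is to reduce the statement to the single-individual variances already recorded in Proposition~\ref{pro:var} and then average over the ${\nbInd}$ individuals. The first ingredient is the independence structure of the model of Section~\ref{sec:model}: the latent variables $T_1,\ldots,T_{\nbInd}$ are i.i.d.\ and, conditionally on them, the $S_i$ are independent, so the pairs $(T_i,S_i)$ — and hence the scores $Y_{K,i}$, which are deterministic functions of $S_i$ and of the fixed replicate count $n_i$ — are mutually independent across $i$ (not identically distributed, because $n_i$ varies). Therefore, for $K\in\{A,M\}$,
\[
    \var\big(\widehat\theta_{T,K}\big) = \var\!\Big(\frac{1}{{\nbInd}}\sum_{i=1}^{\nbInd} Y_{K,i}\Big) = \frac{1}{{\nbInd}^2}\sum_{i=1}^{\nbInd}\var\big(Y_{K,i}\big),
\]
and it remains to substitute the formulas of Proposition~\ref{pro:var} and collect the terms according to how they depend on $n_i$.

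For the average-based estimate, substituting $\var(Y_{A,i})$: the ``between-group'' contribution $\theta_T(1-\theta_T)(1-q-p)^2$ does not depend on $i$, so after summing and dividing by ${\nbInd}^2$ it contributes $\frac{1}{{\nbInd}}\theta_T(1-\theta_T)(1-q-p)^2$; the remaining, ``within-group'' contributions have the form $(\textrm{const})/n_i$, and $\frac{1}{{\nbInd}^2}\sum_i\frac{1}{n_i}=\frac{1}{{\nbInd}}\big(\frac{1}{{\nbInd}}\sum_i\frac{1}{n_i}\big)=\frac{1}{{\nbInd}\,\tilde n}$ by the definition of the harmonic mean $\tilde n$, which yields the first displayed formula. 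For the median-based estimate the bookkeeping is the same in spirit: the terms $\theta_T\gamma(n_i,1-q)+(1-\theta_T)\gamma(n_i,p)$ sum to ${\nbInd}\big(\theta_T\ol\gamma(1-q)+(1-\theta_T)\ol\gamma(p)\big)$ by the definition of $\ol\gamma$, again producing a $1/{\nbInd}$ factor, while the between-group term is $\frac{\theta_T(1-\theta_T)}{{\nbInd}^2}\sum_{i=1}^{\nbInd}\big(1-\delta(n_i,q)-\delta(n_i,p)\big)^2$, where I have used the symmetry $\delta(n_i,1-q)=1-\delta(n_i,q)$ from Proposition~\ref{pro:uv} to rewrite $\delta(n_i,1-q)-\delta(n_i,p)$.

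The main obstacle — and essentially the only step carrying any subtlety — is this last term for the median, because the square sits \emph{inside} the sum over individuals: identifying it with $\frac{1}{{\nbInd}}\theta_T(1-\theta_T)\big(1-\ol\delta(q)-\ol\delta(p)\big)^2$, the form appearing in the statement, is exact precisely when the replicate counts $n_i$ are all equal (so that $\ol\delta(a)=\delta(n_i,a)$), and in the general unequal-$n_i$ case one keeps this contribution in the summed form $\frac{1}{{\nbInd}}\cdot\frac{1}{{\nbInd}}\sum_{i=1}^{\nbInd}\big(1-\delta(n_i,q)-\delta(n_i,p)\big)^2\theta_T(1-\theta_T)$. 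Apart from that point, the argument is a routine substitution of Proposition~\ref{pro:var} followed by rearrangement, with no further analytic content; as an independent check one can re-derive both variances directly from $\var(Y_{K,i})=\e\big[\var(Y_{K,i}\mid T_i)\big]+\var\big[\e(Y_{K,i}\mid T_i)\big]$ together with \eqref{eq:SgivenT} and \eqref{eq:Y_M1}, which is the route that establishes Proposition~\ref{pro:var} itself.
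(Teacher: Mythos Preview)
Your approach is exactly the paper's: both argue that the prevalence estimates are empirical means of independent (but not identically distributed) scores, so the variance is $\frac{1}{{\nbInd}^2}\sum_i\var(Y_{K,i})$, and then plug in Proposition~\ref{pro:var}. The paper's proof is a one-line reference to Proposition~\ref{pro:var} and does not go beyond this.

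Your observation about the between-group term in the median case is well taken: the exact quantity coming out of the computation is $\frac{1}{{\nbInd}}\cdot\frac{1}{{\nbInd}}\sum_i\big(1-\delta(n_i,q)-\delta(n_i,p)\big)^2\,\theta_T(1-\theta_T)$, and this coincides with the displayed expression only when the $n_i$ are constant (or, more generally, when $\delta(n_i,p)+\delta(n_i,q)$ does not vary with $i$). The paper's proof does not address this point, so you have in fact been more careful than the original here.
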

\begin{proof}
    Both prevalence estimates are empirical means of their independent, respective scores. Thus, the proposed formulae are straightforward consequences of Proposition~\ref{pro:var}, which gives the variances of the scores.
\end{proof}

\subsection{Proof of Theorem~\ref{thme:accuracy_Bayesian_prevalence}}\label{sec:proof_accuracy_Bayesian_prevalence}

The proof is similar to that of Theorem~\ref{thme:accuracy_bayesian_classifiers} given in Section~\ref{sec:proof_accuracy_Bayesian_classifiers}. We have to replace the loss function $\ell$ by the squared error loss function $L^2(\theta_T, \widehat\theta) = (\theta_T-\widehat\theta)^2$. And, to replace Lemma~\ref{lem:loss}, we use the fact that the best estimator in terms of the posterior $L^2$-risk is the posterior mean. Likewise, it achieves the minimum Bayesian $L^2$-risk and is admissible.\qed%

\section{Real case data information}

\subsection{Details of the mammogram dataset}\label{app:dataMammo}

Three radiologists did not consider all the exams: radiologist code \texttt{1201} to whom 63 positive cases are presented, corresponding to 1 missing positive case exam. Radiologist \texttt{7714} to whom 61 positive cases and 79 negative cases are presented, corresponding to 3 missing positive case exams and five missing negative case exams. Radiologist \texttt{9007} to whom 83 negative cases are presented, corresponding to one missing negative case exam.
Thus, four missing positive case exams and six missing negative case exams among the $n_{rad}$ radiologists.

For any radiologist $j$, it is possible to estimate his false-positivity rate $p_j$, and his false-negativity rate $q_j$ as
\begin{align}
    \widehat{p}_j = \dfrac{
    \sum_{i=1}^{\nbInd} \ind{T_{i}=0}\ind{X_{ij}=1}\ind{m_{ij}=1}
    }{
    \sum_{i=1}^{\nbInd} \ind{T_{i}=0}\ind{m_{ij}=1}
    }
    \quad\text{and}\quad
    \widehat{q}_j =  \dfrac{
    \sum_{i=1}^{\nbInd} \ind{T_{i}=1}\ind{X_{ij}=0}\ind{m_{ij}=1}
    }{
    \sum_{i=1}^{\nbInd} \ind{T_{i}=1}\ind{m_{ij}=1}
    },\label{equ:simupq}
\end{align}
where $m_{ij}$ is equal to $1$ if radiologist $j$ observes mammography $i$ and $0$ otherwise.

\subsection{A simulation procedure for synthetic mammography datasets}\label{app:simulmMamo}

This paragraph details a simulation procedure to generate a dataset $\boldsymbol{X}=\left\{\left\{x_{ij}\right\}_{j},t_i,n_i\right\}_{i}$. $x_{ij}$s are filled with 0 and 1 corresponding to realizations of the variables $X_{ij}$ for each of the $n$ patients and each of the $n_{rad}$ radiologists. This dataset must verify different points:
\begin{itemize}
    \item $t_i$ is set equal to 0 for $i\in[\![1,84]\!]$ and to 1 for $i\in[\![85,148]\!]$
    \item according to the equations~\eqref{equ:simupq} and for any radiologist $j$, the $\left(x_{ij}\right)_{i}$s are such as
    \begin{itemize}
        \item there are exactly $84\widehat{p}_j$ 1s and $84(1-\widehat{p}_j)$ 0s in the 84 first coordinates of $\left(x_{ij}\right)_{i}$
        \item there are exactly $64\widehat{q}_j$ 0s and $64(1-\widehat{q}_j)$ 1s in the 64 last coordinates of $\left(x_{ij}\right)_{i}$
    \end{itemize}
    \item $n_i$ is set equal to $n_{rad}$ except for radiologists of ids 1201, 7714 and 9007, as discussed in the following point.
    \item concerning the missing values discussed earlier:
    \begin{itemize}
        \item there is 1 missing positive case for radiologist of id 1201 and $n_i$ is set to 109
        \item there are 3 missing positive cases and 5 missing negative cases for radiologist of id 7714  and $n_i$ is set to 102
        \item there is 1 missing negative case for radiologist of id 9007 and $n_i$ is set to 109.
    \end{itemize}
    \item among the patients, some must be hard to be properly diagnosed.
\end{itemize}
For the last point we have chosen the following procedure for any radiologist $j$: treat $x_j^{(0)}:=\left\{x_{ij}\right\}_{i\in[\![1,84]\!]}$ and $x_j^{(1)}:=\left\{x_{ij}\right\}_{i=[\![85,148]\!]}$ in parallel. In $x_j^{(0)}$, there are exactly $84\widehat{p}_j$ values equal to 1 and $84(1-\widehat{p}_j)$ values equal to 0. In $x_j^{(1)}$, there are exactly $64\widehat{q}_j$ values equal to 0 and $64(1-\widehat{q}_j)$ values equal to 1. In both cases, we throw the desired positions of $84\widehat{p}_j$ values among the vector $[\![1,84]\!]$, respectively the desired positions of $64\widehat{q}_j$ values among the vector $[\![85,148]\!]$. Each patient is weighted, based on weights $w_i$, corresponding to the difficulty of correctly diagnosing patient $i$ (say \textit{$x_{ij}=1$ while $t_i=0$} and \textit{$x_{ij}=0$ while $t_i=1$}) meaning that if $w_k>w_i$ then patient $i$ is harder to diagnose correctly than patient $k$.
Note, there are many possible datasets:
\begin{align*}
    \prod_{j=1}^{n_{rad}}
    C_{64}^{64\widehat{q}_j}
    C_{84}^{84\widehat{p}_j},
\end{align*}
in the mean case $\widehat{p}_j$ are all equal to $1/N\sum_{j=1}^N\widehat{p}_j\approx0.22$ and $\widehat{q}_j$ are all equal to $1/N\sum_{j=1}^N\widehat{q}_j\approx0.13$ and the formula simplifies in $\left(C_{64}^{32}C_{84}^{42}\right)^{110}$, which is not reachable by common computers. So, testing all the possible datasets might be inappropriate. Although it is possible to derive the law of $x_j^{(0)}$ and $x_j^{(1)}$. Let us denote
\begin{align*}
    \mathcal{U}([\![\alpha,\beta]\!],n_0)=\left\{
        \text{Draw of size $n_0$ in $[\![\alpha,\beta]\!]$ without replacement}
    \right\}.
\end{align*}
Then $u\in \mathcal{U}([\![1,84]\!],84\widehat{p}_j)$ gives the indices of the non-diseased women who are diagnosed as diseased by the radiologist. Denoting $u=\{u_{i}\}_{i\in[\![1,84\widehat{p}_j]\!]}$:
\begin{align*}
    \p(\{x_{ij}^{(0)}\}_{i\in u}=1,\{x_{ij}^{(0)}\}_{i\in u^c}=0|\widehat{p}_j) =
    \dfrac{
        \prod_{i\in [\![1,84\widehat{p}_j]\!]}w_{u_i}
    }{
        \prod_{i\in  [\![1,84\widehat{p}_j]\!]}\left(
            \sum_{k\in u_{[\![i,84\widehat{p}_j]\!]}}w_k
            +
            \sum_{k\in u^c}w_k
        \right)
    },
\end{align*}
where $u^c$ is the complementary of $u$ in $[\![1,84]\!]$. Equivalently, for any $v\in\mathcal{U}([\![85,148]\!],64\widehat{q}_j)$:
\begin{align*}
    \p(\{x_{ij}^{(1)}\}_{i\in v}=0,\{x_{ij}^{(1)}\}_{i\in v^c}=1|\widehat{q}_j) =
    \dfrac{
        \prod_{i\in [\![85,64\widehat{q}_j]\!]}w_{v_i}
    }{
        \prod_{i\in [\![85,64\widehat{q}_j]\!]}\left(
            \sum_{k\in v_{[\![i,64\widehat{q}_j]\!]}}w_k
            +
            \sum_{k\in v^c}w_k
        \right)
    },
\end{align*}
where $v^c$ is the complementary of $v$ in $[\![85,148]\!]$.
These probabilities were calculated by considering the order, which is the solution chosen by the function \texttt{sample} of the software \textbf{R}~\cite{rsoftware}.

\begin{figure}
    \centering
    \includegraphics[width=\textwidth]{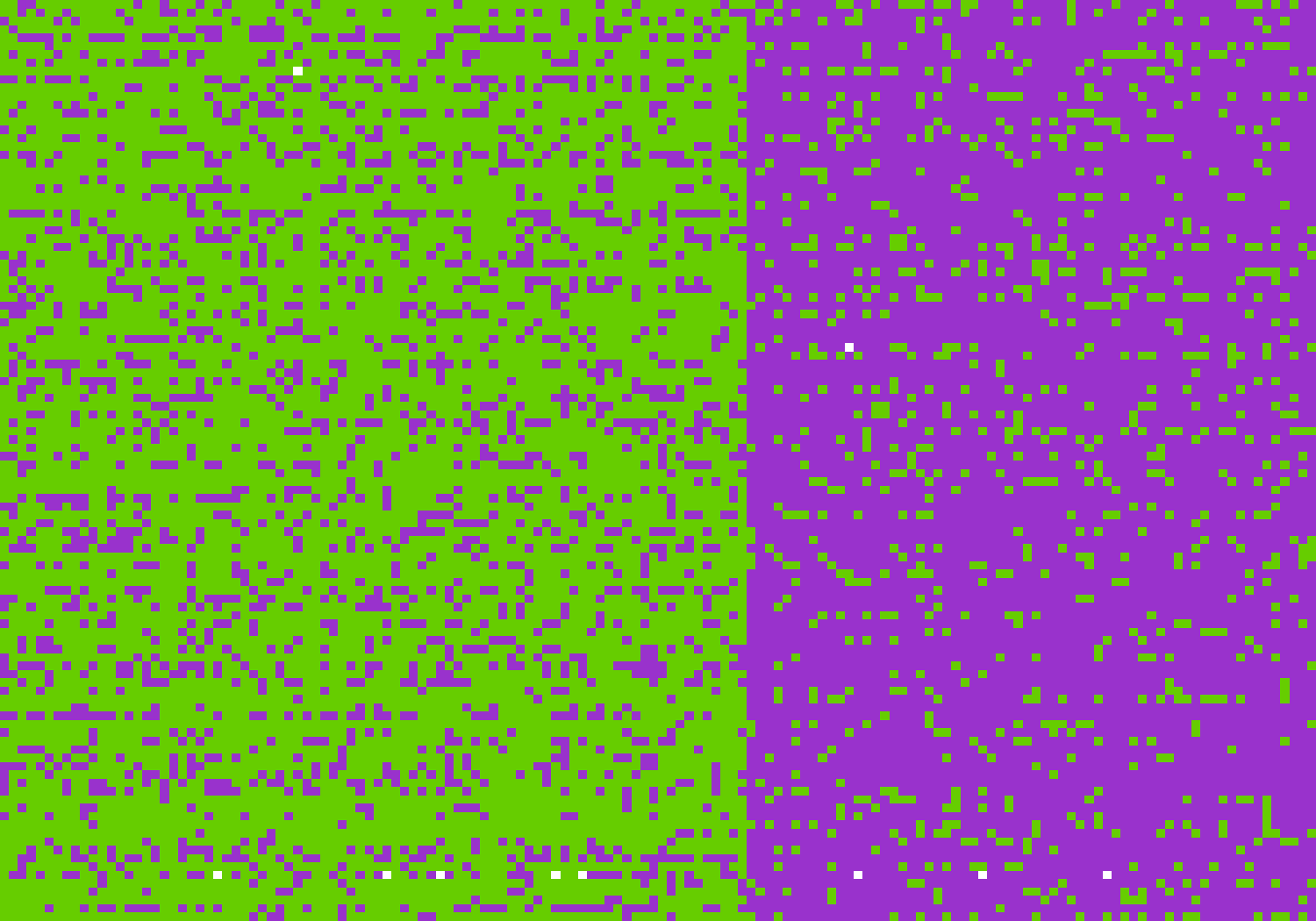}
    \caption{Example of a simulated dataset for the mammography analysis where green pixels correspond to 0's, purple pixels correspond to 1's, and white corresponds to NA's. Each row is a radiologist, and each column is a mammogram.}
    \label{fig:mamographyExampleSimuData}
\end{figure}

\end{document}